\documentclass[11pt]{article}
\usepackage[utf8]{inputenc}
\usepackage{fullpage}
\usepackage{setspace}
\usepackage{etex}
\usepackage{latexsym}
\usepackage{amssymb}
\usepackage{mathrsfs}
\usepackage{fancyhdr}
\usepackage{xcolor}
\usepackage{pictex}
\usepackage{placeins}
\usepackage{float}
\usepackage{booktabs}
\usepackage{supertabular}
\usepackage{pict2e}
\usepackage{slashbox}
\usepackage[english]{babel}
\usepackage{graphicx}

\newcommand{\omt}[1]{}

\title{Fast Isomorphism Testing of Graphs with Regularly-Connected Components}
\author{Jos\'e Luis L\'opez-Presa\\
DIATEL, Universidad Polit\'ecnica de Madrid\\
Madrid, Spain\\
jllopez@diatel.upm.es
\and
Antonio Fern\'andez Anta\\
Institute IMDEA Networks\\
Madrid, Spain\\
antonio.fernandez@imdea.org
}
\date{}

\newenvironment{packed_enum}{
\begin{enumerate}
  \setlength{\itemsep}{1pt}
  \setlength{\parskip}{0pt}
  \setlength{\parsep}{0pt}
}{\end{enumerate}}

\newenvironment{packed_itemize}{
\begin{itemize}
  \setlength{\itemsep}{1pt}
  \setlength{\parskip}{0pt}
  \setlength{\parsep}{0pt}
}{\end{itemize}}

\begin{document}

\newtheorem{theorem}{Theorem}
\newtheorem{lemma}{Lemma}
\newtheorem{definition}{Definition}
\newtheorem{proposition}{Proposition}
\newtheorem{corollary}{Corollary}
\newtheorem{conjecture}{Conjecture}
\newtheorem{observation}{Observation}
\newtheorem{remark}{Remark}
\newenvironment{proof}{\noindent{\bf Proof:}}{\hfill \rule{2mm}{2mm}\\}

\newcounter{algline}

\newcommand{\nl}{\\ \>\arabic{algline}\' \stepcounter{algline}}

\newenvironment{algo}[1]{\setcounter{algline}{1}\begin{tabbing} 123\=\=678\=123\=678\=123\=678\=123\=678\=123\=678\=  \kill {\bf #1} \nl}{\end{tabbing}}

\newenvironment{boxfig}[1]{\begin{figure}[!tb]\fbox{\begin{minipage}{0.98\linewidth}
                        \vspace{1em}
                        \makebox[0.025\linewidth]{}
                        \begin{minipage}{0.95\linewidth}
                        #1
                        \end{minipage}
                        \end{minipage}}}{\end{figure}}

\newcommand{\B}{\vspace*{-\smallskipamount}}
\newcommand{\BB}{\vspace*{-\medskipamount}}
\newcommand{\BBB}{\vspace*{-\bigskipamount}}

\floatstyle{ruled}
\newfloat{Algorithm}{!tb}{loa}

\maketitle

\begin{abstract}
The Graph Isomorphism problem has both theoretical and practical interest. In this paper we present
an algorithm, called \emph{conauto-1.2}, that efficiently tests whether two graphs are isomorphic, 
and finds an isomorphism if they are. This algorithm is an improved version of the
algorithm \emph{conauto}, which has been shown to be very fast for random graphs and several families of hard graphs
\cite{DBLP:conf/wea/Lopez-PresaA09}.
In this paper we establish a new theorem that allows, at very low cost, the easy discovery of many
automorphisms. This result is especially suited for graphs with regularly connected components, 
and can be applied in any isomorphism testing and canonical labeling algorithm to drastically 
improve its performance. In particular, algorithm conauto-1.2 is obtained by the application of this result to conauto. 
The resulting algorithm preserves all the nice features
of conauto, but drastically improves the testing of graphs with regularly connected components.
We run extensive experiments, which show that the most popular algorithms (namely, \emph{nauty} \cite{McK81,nautyP} and \emph{bliss} \cite{DBLP:conf/alenex/JunttilaK07})
can not compete with conauto-1.2 for these graph families.
\end{abstract}

\newpage

\section{Introduction}
\label{s1}

The Graph Isomorphism problem (GI) is of both theoretical and practical interest. GI
tests whether there is a one-to-one mapping between the vertices of two graphs that preserves the arcs. 
This problem has applications in many fields, like
pattern recognition and computer vision \cite{ConteFSV03}, data mining \cite{WashioM03},
VLSI layout validation \cite{MAF90}, and chemistry \cite{Faulon98,TinKlin99}. At the theoretical
level,  its main theoretical interest is that it is not known whether GI is in P or whether it is NP-complete.

\paragraph{Related Work}
It would be nice to find a complete graph-invariant\footnote{A complete graph-invariant is
a function on a graph that gives the same result for isomorphic graphs, and different results
for non-isomorphic graphs.} computable in polynomial time,
what would allow testing graphs for isomorphism in polynomial time. However, no
such invariant is known, and it is unlikely to exist. Note, though, that there
are many simple instances of GI, and that many families of graphs
can be tested for isomorphism in polynomial time: trees \cite{AHU74}, planar graphs
\cite{HopWong74}, graphs of bounded degree \cite{FilMay80}, etc. For a review
of the theoretical results related to GI see \cite{DBLP:conf/wea/Lopez-PresaA09,jllopez2009}.

The most interesting practical approaches to the GI problem are (1)
the direct approach, which uses backtracking to find a match between the graphs, using
techniques to prune the search tree, and (2) computing a 
\emph{certificate}\footnote{A certificate of a graph is a canonical labeling of the graph.} of each of the graphs to
test, and then compare the certificates directly.
The direct approach can be used for both graph and subgraph isomorphism (e.g. vf2
\cite{Cor01vf2} and Ullman's \cite{Ullman76} algorithms), but has problems when dealing with
highly regular graphs with a relatively small automorphism group. In this case, even
the use of heuristics to prune the search space frequently does not prevent the proposed algorithms from
exploring paths equivalent to those already tested. To avoid this, it
is necessary to keep track of discovered automorphisms, and use this information to
aggressively prune the search space.
On the other hand, using certificates, since two isomorphic graphs
have the same canonical labeling, their certificates can be compared directly. This is the approach
used by the well-known algorithm \emph{nauty} \cite{McK81,nautyP}, and the algorithm
\emph{bliss} \cite{DBLP:conf/alenex/JunttilaK07} (which has better performance than nauty
for some graph families). This approach requires computing the full automorphism group of the
graph (at least a set of generators). In
most cases, these algorithms are faster than the ones that use the direct approach.

Algorithm \emph{conauto} \cite{DBLP:conf/wea/Lopez-PresaA09} uses a new 
approach to graph isomorphism\footnote{A preliminary version of conauto has 
been included in the LEDA C++ class library of algorithms \cite{singler2005}.}.
It combines the use of discovered
automorphisms with a backtracking algorithm that tries to find a match of the graphs
without the need of generating a canonical form. To test graphs of $n$ nodes
conauto uses $O(n^2 \log n)$ bits of memory. Additionally, it runs in polynomial time
(on $n$) with high probability for random graphs. In real experiments, for several families of interesting
hard graphs, conauto is faster than nauty and vf2, as shown 
in \cite{DBLP:conf/wea/Lopez-PresaA09}. For example
Miyazaki's graphs \cite{Miyazaki97}, are very hard for vf2, nauty, and bliss, but conauto
handles them efficiently.
However, it was found in \cite{DBLP:conf/wea/Lopez-PresaA09} that some families 
of graphs built from regularly connected components (in particular, 
from strongly regular graphs) are not handled efficiently by any of the algorithms evaluated.
While conauto runs fast when the tested graphs are isomorphic, 
it is very slow when the graphs are not isomorphic.

\paragraph{Contributions}
In this paper we establish a new theorem that allows, at very low cost, the easy discovery of many
automorphisms. This result is especially suited for graphs with regularly connected components,
and can be applied in any direct isomorphism testing or canonical 
labeling algorithm to drastically improve its performance.

Then, a new algorithm, called \emph{conauto-1.2}, is proposed. This algorithm is obtained by
improving conauto with techniques derived from the above mentioned theorem. In particular,
conauto-1.2 reduces the backtracking needed to explore every
plausible path in the search space with respect to conauto.
The resulting algorithm preserves all the nice features
of conauto, but drastically improves the testing of some graphs, like those
with regularly connected components.

We have carried out experiments to compare the practical performance of conauto-1.2,
nauty, and bliss, with different families of graphs built by regularly
connecting copies of small components. The experiments show that, for this type
of construction, conauto-1.2 not only is the fastest, but also has a very regular behavior.

\paragraph{Structure}
In Section~\ref{definitions}, we define the basic theoretical concepts used in algorithm conauto-1.2
and present the theorems on which its correction relies. Next, in Section~\ref{conauto-1.2}
we describe the algorithm itself. Then, Section~\ref{performance} describes the graph families
used for the tests, and show the practical performance of conauto-1.2 compared with conauto,
nauty and bliss for these families. Finally we put forward our conclusions and propose
new ways to improve conauto-1.2.

\section{Theoretical Foundation}
\label{definitions}

\subsection{Basic Definitions}

A {\em directed graph} $G=(V,R)$ consists of a finite non-empty set $V$ of vertices and a binary
relation $R$, i.e. a subset $R \subseteq V \times V$.
The elements of $R$ are called {\em arcs}. An arc $(u,v) \in R$ is considered
to be oriented from $u$ to $v$. An {\em undirected graph} is a graph whose arc set $R$ is
symmetrical, i.e. $(u,v) \in R$ iff $(v,u) \in R$. From now on, we will use the term {\em graph}
to refer to a {\em directed graph}.

\begin{definition}
An isomorphism of graphs $G=(V_G,R_G)$ and $H=(V_H,R_H)$ is a bijection between the vertex sets
of $G$ and $H$, $f:V_G \longrightarrow V_H$, such that $(v,u) \in R_G \iff (f(v),f(u)) \in R_H$.
Graphs $G$ and $H$ are called {\em isomorphic}, written $G \simeq H$, if there is at least one
isomorphism of them. An {\em automorphism} of $G$ is an isomorphism of $G$ and itself.
\end{definition}

Given a graph $G=(V,R)$, $R$ can be represented by an {\em adjacency matrix}
$\mathit{Adj}(G)=A$ with size $|V| \times |V|$ in the following way:
\begin{displaymath}
A_{uv} = \left\{ \begin{array}{ll | ll}
        0 & \textrm{if $(u,v) \notin R \land (v,u) \notin R$} & 1 & \textrm{if $(u,v) \notin R \land (v,u) \in R$}\\
        2 & \textrm{if $(u,v) \in R \land (v,u) \notin R$} & 3 & \textrm{if $(u,v) \in R \land (v,u) \in R$}
\end{array} \right.
\end{displaymath}


%
Let $G=(V,R)$ be a graph, and $\mathit{Adj}(G)=A$ its adjacency matrix.
Let $V_1 \subseteq V$ and $v\in V$, the {\em available degree} of $v$ in $V_1$ under $G$,
denoted by $\mathit{ADeg}(v,V_1,G)$, is  the degree of $v$ with respect to $V_1$, i.e., the 3-tuple $(D_3,D_2,D_1)$ where 
$D_i = | \{ u \in V_1: A_{vu} = i \} |$ for $i \in \{1,2,3\}$. The predicate $\mathit{HasLinks}(v,V_1,G)$
says if $v$ has any neighbor in $V_1$, i.e., $\mathit{ADeg}(v,V_1,G) \neq (0,0,0)$.
%
%
Extending the notation, let $V_1,V_2 \subseteq V$; if
$\forall u,v \in V_1, \mathit{ADeg}(u,V_2,G)=\mathit{ADeg}(v,V_2,G)=d$, then, we
denote $\mathit{ADeg}(V_1,V_2,G)=d$. $\mathit{HasLinks}(V_1,V_2,G)$ is defined similarly.


We will say a 3-tuple $(D_3,D_2,D_1) \prec (E_3,E_2,E_1)$ when the first one precedes
the second one in lexicographic order.
This notation will be used
to order the available degrees of vertices and sets.




\subsection{Specific Notation and Definitions for the Algorithms}

It will be necessary to introduce some specific notation to be used in the specification
of our algorithms. Like other isomorphism testing algorithms, ours relies on vertex
classification. Let us start defining what a partition is, and the partition
concatenation operation.

A {\em partition} of a set $S$ is a sequence $\mathcal{S} = ( S_1, ..., S_r )$ of disjoint
nonempty subsets of $S$ such that $S = \bigcup_{i=1}^r S_i$.  The sets $S_i$ are called
the {\em cells} of $\mathcal{S}$. The empty partition will be denoted by $\emptyset$. 

\begin{definition}
Let $\mathcal{S}=(S_1,...,S_r)$ and $\mathcal{T}=(T_1,...,T_s)$ be partitions of two disjoint
sets $S$ and $T$, respectively. The {\em concatenation} of $\mathcal{S}$ and $\mathcal{T}$,
denoted $\mathcal{S} \circ \mathcal{T}$, is the partition $(S_1,...,S_r,T_1,...,T_s)$.
Clearly, $\emptyset \circ \mathcal{S} = \mathcal{S} = \mathcal{S} \circ \emptyset$.
\end{definition}




Let $G=(V,R)$ be a graph, $v \in V$, $V_1 \subseteq V \setminus \{v\}$.
The {\em vertex partition} of $V_1$ by $v$, denoted $\mathit{PartitionByVertex}(V_1,v,G)$, is a
partition $(S_1,...,S_r)$ of $V_1$ such that for all $i,j \in \{1,...,r\}$, $i>j$ implies
$\mathit{ADeg}(S_i,\{v\},G) \prec \mathit{ADeg}(S_j,\{v\},G)$. Let $V_1,V_2 \subseteq V$. The
{\em set partition} of $V_1$ by $V_2$,
denoted $\mathit{PartitionBySet}(V_1,V_2,G)$, is a partition $(S_1,...,S_r)$ of $V_1$ such that
for all $i,j \in \{1,...,r\}$, $i>j$ implies $\mathit{ADeg}(S_i,V_2,G) \prec \mathit{ADeg}(S_j,V_2,G)$.

\begin{definition}
Let $G=(V,R)$ be a graph, and $\mathcal{S}=(S_1,...,S_r)$ a partition of $V$. Let $v \in S_x$
for some $x \in \{1,...,r\}$. The {\em vertex refinement} of $\mathcal{S}$ by $v$, denoted
$\mathit{VertexRefinement}(\mathcal{S},v,G)$ is the partition $\mathcal{T}=\mathcal{T}_1
\circ...\circ \mathcal{T}_r$ such that for all $i \in \{1,...,r\}$, $\mathcal{T}_i$ is the
empty partition $\emptyset$ if $\lnot \mathit{HasLinks}(S_i,V,G)$, and $\mathit{PartitionByVertex}
(S_i\setminus\{v\},v,G)$ otherwise. $S_x$ is the \emph{pivot set} and $v$ is the \emph{pivot vertex}.
\end{definition}


\begin{definition}
Let $G=(V,R)$ be a graph, and $\mathcal{S}=(S_1,...,S_r)$ a partition of $V$. Let $P=S_x$ for
some $x \in \{1,...,r\}$ be a given \emph{pivot set}. The {\em set refinement} of $\mathcal{S}$ by $P$,
denoted $\mathit{SetRefinement}(\mathcal{S},P,G)$ is the partition $\mathcal{T}=\mathcal{T}_1
\circ...\circ \mathcal{T}_r$ such that for all $i \in \{1,...,r\}$, $\mathcal{T}_i$ is the
empty partition $\emptyset$ if $\lnot \mathit{HasLinks}(S_i,V,G)$, and $\mathit{PartitionBySet}
(S_i,P,G)$ otherwise.
\end{definition}

Once we have presented the possible partition refinements that may be applied to partitions, we
can build sequences of partitions in which an initial partition (for example the one with one cell
containing all the vertices of a graph) is iteratively refined using the two previously defined
refinements.  Vertex refinements are tagged as $\mathrm{VERTEX}$
(if the pivot set has only one vertex), $\mathrm{SET}$ (if a set refinement is possible with some
pivot set), or $\mathrm{BACKTRACK}$ (when a vertex refinement is performed with a pivot set with
more than one vertex).

\begin{definition}
Let $G=(V,R)$ be a graph. A {\em sequence of partitions} for graph $G$ is a tuple
$(\mathsf{S},\mathsf{R},\mathsf{P})$, where $\mathsf{S}=(\mathcal{S}^0,...,\mathcal{S}^t)$,
are the partitions themselves, $\mathsf{R}=(R^0,...,R^{t-1})$ indicate the type of
refinement applied at each step, and $\mathsf{P}=(P^0,...,P^{t-1})$ choose the pivot set
used for each refinement step, such that all the following statements hold:
\begin{packed_enum}
\item
For all $i\in\{0,...,t-1\}$, $R^i \in \{\mathrm{VERTEX},\mathrm{SET},\mathrm{BACKTRACK}\}$,
and $P^i \in \{1,...,|\mathcal{S}^i|\}$.
\item
For all $i\in\{1,...,t-1\}$, let $\mathcal{S}^i=(S^i_1,...,S^i_{r_i})$, $V^i=\bigcup_{j=1}^{r_i} S^i_j$. Then:
\begin{packed_enum}
\item
$R^i=\mathrm{SET}$ implies $\mathcal{S}^{i+1} = \mathit{SetRefinement}(\mathcal{S}^i, S^i_{P^i},G)$.
\item
$R^i\ne\mathrm{SET}$ implies $\mathcal{S}^{i+1} = \mathit{VertexRefinement}(\mathcal{S}^i, v,G)$
for some $v \in S^i_{P^i}$.
\end{packed_enum}
\item
Let $\mathcal{S}^t=(S^t_1,...,S^t_r)$, $V^t=\bigcup_{j=1}^r S^t_j$, then for all
$S^t_x\in\mathcal{S}^t$, $|S^t_x|=1$ or $\lnot\mathit{HasLinks}(S^t_x,V^t,G)$.
\end{packed_enum}
\end{definition}

For convenience, for all $l \in \{1,...,t-1\}$, by {\em level} $l$ we refer to the
tuple $(\mathcal{S}^l,R^l,P^l)$ in a sequence of partitions. Level $t$ is identified
by $\mathcal{S}^t$, since $R^t$ and $P^t$ are not defined.

We will now introduce the concept of compatibility among partitions, and then define compatibility
of sequences of partitions.
%
Let $\mathcal{S}=(S_1,...,S_r)$ be a partition of the set of vertices of a graph $G=(V_G,R_G)$, and
let $\mathcal{T}=(T_1,...,T_s)$ be a partition of the set of vertices of a graph $H=(V_H,R_H)$.
$\mathcal{S}$ and $\mathcal{T}$ are said to be {\em compatible} under $G$ and $H$ respectively
if $|\mathcal{S}|=|\mathcal{T}|$ (i.e. $r=s$), and for all $i\in\{1,...,r\}$, $|S_i|=|T_i|$
and $\mathit{ADeg}(S_i,V_G,G) = \mathit{ADeg}(T_i,V_H,H)$.


\begin{definition}
\label{def-compat-seq}
Let $G=(V_G,R_G)$ and $H=(V_H,R_H)$ be two graphs.
Let $\mathsf{Q}_G=(\mathsf{S}_G,\mathsf{R}_G,\mathsf{P}_G)$, 
and $\mathsf{Q}_H=(\mathsf{S}_H,\mathsf{R}_H,\mathsf{P}_H)$
be two sequences of partitions for graphs $G$ and $H$ respectively.
$\mathsf{Q}_G$ and $\mathsf{Q}_H$ are said to be {\em compatible sequences of partitions} if:
\begin{packed_enum}
\item
$|\mathsf{S}_G|=|\mathsf{S}_H|=t$, $|\mathsf{R}_G|=|\mathsf{R}_H|=|\mathsf{P}_G|=|\mathsf{P}_H|=t-1$.
\item
Let $\mathsf{R}_G=(R_G^0,...,R_G^{t-1})$, $\mathsf{R}_H=(R_H^0,...,R_H^{t-1})$,
$\mathsf{P}_G=(P_G^0,...,P_G^{t-1})$, $\mathsf{P}_H=(P_H^0,...,P_H^{t-1})$,
$\mathsf{S}_G=(\mathcal{S}^0,...,\mathcal{S}^t)$, $\mathsf{S}_H=(\mathcal{T}^0,...,\mathcal{T}^t)$. 
For all $i \in \{0,...,t-1\}$, $R_G^i=R_H^i$, $P_G^i=P_H^i$, and $\mathcal{S}^i$ and $\mathcal{T}^i$ are
compatible under $G$ and $H$ respectively.
\item
Let $ \mathcal{S}^t=(S^t_1,...,S^t_r)$, $\mathcal{T}^t=(T^t_1,...,T^t_r)$, then for all
$x,y \in \{1,...,r\}$, $\mathit{ADeg}(S^t_x,S^t_y,G) = \mathit{ADeg}(T^t_x,T^t_y,H)$.
%
\end{packed_enum}
\end{definition}

The following theorem shows that having compatible sequences of partitions is equivalent to being isomorphic.

\begin{theorem}[\cite{DBLP:conf/wea/Lopez-PresaA09}]
\label{iso-iif-seq}
Two graphs $G$ and $H$ are isomorphic if and only if there are two compatible sequences of partitions
$\mathsf{Q}_G$ and $\mathsf{Q}_H$ for graphs $G$ and $H$ respectively.
\end{theorem}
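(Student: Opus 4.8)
The statement is an equivalence, so the plan is to prove the two implications separately. Throughout I will use the adjacency encoding $A_{uv}\in\{0,1,2,3\}$ together with the fact that a bijection $f$ is an isomorphism exactly when $A^G_{uv}=A^H_{f(u)f(v)}$ for every ordered pair; I will also use that an adjacency-preserving bijection preserves available degrees, the predicate $\mathit{HasLinks}$, and the order $\prec$ on degree tuples, hence commutes with both $\mathit{PartitionByVertex}$ and $\mathit{PartitionBySet}$. I first note that every graph admits at least one sequence of partitions: starting from the one-cell partition one repeatedly refines, individualizing a vertex whenever no set refinement splits a non-singleton cell. This terminates because each set refinement strictly increases the number of cells while each individualization strictly shrinks the active vertex set, and it ends at a partition meeting the terminal condition of the definition.

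For the forward implication, suppose $f\colon V_G\to V_H$ is an isomorphism. I would fix any sequence of partitions $\mathsf{Q}_G=(\mathsf{S}_G,\mathsf{R}_G,\mathsf{P}_G)$ for $G$ and transport it through $f$: set $\mathcal{T}^i:=(f(S^i_1),\dots,f(S^i_{r_i}))$, take $\mathsf{R}_H:=\mathsf{R}_G$ and $\mathsf{P}_H:=\mathsf{P}_G$, and use $f(v)$ as the pivot vertex wherever $v$ is used in $G$. Because $f$ commutes with both refinement operators and with the cell ordering, an induction on the level shows that the transported triple $\mathsf{Q}_H$ is a genuine sequence of partitions for $H$: the same cells are dropped and the terminal condition is inherited. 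Compatibility is then immediate—the lengths agree, $\mathsf{R}$ and $\mathsf{P}$ coincide by construction, each pair $\mathcal{S}^i,\mathcal{T}^i$ is compatible since $f$ preserves sizes and available degrees cell-by-cell, and the terminal cross-degree equality $\mathit{ADeg}(S^t_x,S^t_y,G)=\mathit{ADeg}(T^t_x,T^t_y,H)$ holds because $f$ preserves adjacency.

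For the converse, given compatible $\mathsf{Q}_G,\mathsf{Q}_H$ I would build the isomorphism block by block. Each vertex of $G$ leaves the active set exactly once—as the pivot of a $\mathrm{VERTEX}$ or $\mathrm{BACKTRACK}$ step, or inside a link-free cell dropped during a refinement—or else survives into $\mathcal{S}^t$ as a singleton or inside a terminal link-free cell. I define $f$ by sending each pivot vertex and each terminal singleton to its positional counterpart in $H$ (forced, since corresponding cells have equal size), and by choosing an arbitrary size-respecting bijection on each link-free cell; compatibility makes corresponding blocks equinumerous, so $f\colon V_G\to V_H$ is a bijection. To see it is an isomorphism I would verify $A^G_{uv}=A^H_{f(u)f(v)}$ for every pair, maintaining by induction the invariant $f(S^i_j)=T^i_j$ on active cells. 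The engine is pivot refinement: when $v$ is individualized at level $i$, each still-active vertex falls into a sub-cell determined by its adjacency to $v$, with sub-cells ordered by $\prec$; compatibility of $\mathcal{S}^{i+1}$ and $\mathcal{T}^{i+1}$ then forces each active vertex and its image into corresponding sub-cells, so their adjacencies to $v$ and to $f(v)$ agree. Adjacencies of $v$ to earlier-removed vertices are already forced by the same invariant applied at their earlier removal levels; adjacencies into a link-free cell vanish on both sides, so the arbitrary bijection there is harmless; and the remaining adjacencies between terminal cells are pinned down by condition~3 of compatibility.

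The step I expect to be the main obstacle is the bookkeeping underlying this last induction: keeping $f(S^i_j)=T^i_j$ synchronized as vertices leave the active set, separating individualized pivots (whose images are forced) from link-free cells (whose images are free), and—most delicately—reconciling the \emph{active} vertex set $V^i$ on which $\mathit{HasLinks}$ is evaluated in the refinements with the vertex set used to compare available degrees in the definition of compatibility, so that exactly the same cells are dropped on both graphs at each step. Once this consistency of dropping is derived from partition compatibility, the induction closes and $f$ is the required isomorphism.
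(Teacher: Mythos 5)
This theorem is imported from the earlier conauto paper \cite{DBLP:conf/wea/Lopez-PresaA09} and is not proved in the present text, so there is no in-paper proof to compare against; judged on its own, your argument is the natural one and is essentially sound. The forward direction (transporting a sequence of partitions for $G$ through the isomorphism $f$, using that $f$ preserves adjacency codes and hence commutes with $\mathit{PartitionByVertex}$, $\mathit{PartitionBySet}$, the $\prec$-ordering of sub-cells, and the $\mathit{HasLinks}$ test) is complete as sketched, as is your observation that every graph admits at least one sequence of partitions.

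In the converse, the one step you should not leave at the level of ``compatibility then forces'' is the claim that an active vertex $u\in S^{i+1}_m$ and its positional counterpart in $T^{i+1}_m$ have the same adjacency code to the pivots $v$ and $f(v)$. Equal cell sizes together with the $\prec$-ordering of sub-cells do \emph{not} suffice for this: a parent cell could split into sub-cells of identical sizes in the same order but realizing different codes (e.g.\ codes $3,0$ versus $2,0$). What closes the gap is the degree bookkeeping: since cells dropped at level $i$ are link-free with respect to $V^i$, the only vertex of $V^i\setminus V^{i+1}$ to which an active vertex can be adjacent is the pivot, so $\mathit{ADeg}(u,\{v\},G)=\mathit{ADeg}(S^i_j,V^i,G)-\mathit{ADeg}(S^{i+1}_m,V^{i+1},G)$, and both terms on the right are matched on the $H$ side by the compatibility of levels $i$ and $i+1$. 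With that identity in hand, your invariant $f(S^i_j)=T^i_j$, the consistency-of-dropping argument (a cell is dropped iff its available degree in the active set is $(0,0,0)$, which is a compatibility-preserved quantity), and condition~3 for the terminal cross-adjacencies do assemble into a correct proof.
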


In order to properly handle automorphisms, sequences of partitions will be extended with vertex equivalence
information.
Two vertices $u,v\in V$ of a graph $G=(V,R)$ are {\em equivalent}, denoted $u \equiv v$, if there is
an automorphism $f$ of $G$ such that $f(u)=v$. A vertex $w \in V$ is {\em fixed}
by $f$ if $f(w)=w$.
When two vertices are equivalent, they are said to belong to the same {\em orbit}. The set of all
the orbits of a graph is called the {\em orbit partition}. 
Our algorithm performs a partial computation of the orbit partition. The orbit partition will be
computed incrementally, starting from the singleton partition. Since our algorithm performs a
limited search for automorphisms, it is possible that it stops before the orbit partition is
really found. Therefore, we will introduce the notion of {\em semiorbit partition},
and extend the sequence of partitions to include a semiorbit partition.

\begin{definition}
\label{def-semiorbit-partition}
Let $G=(V,R)$ be a graph. 
A {\em semiorbit partition} of $G$ is any partition $\mathsf{O}=\{O_1,...,O_k\}$
of $V$, such that $\forall i\in \{1,...,k\}$, $v,u \in O_i$ implies that $v \equiv u$.
\end{definition}

%
\begin{definition}
An {\em extended sequence of partitions} $\mathsf{E}$ for a graph
$G=(V,R)$ is a tuple $(\mathsf{Q},\mathsf{O})$, where $\mathsf{Q}$ is a sequence of partitions,
denoted as $\mathit{SeqPart}(\mathsf{E})$, and $\mathsf{O}$ is a semiorbit partition of $G$,
denoted as $\mathit{Orbits}(\mathsf{E})$.
\end{definition}

Finally, we introduce a notation for the number of vertex refinements tagged $\mathrm{BACKTRACK}$,
since it will be used to choose the target
sequence of partitions to be reproduced.
%
Let $\mathsf{Q}=(\mathsf{S},\mathsf{R},\mathsf{P})$ be a sequence of partitions, and let
$\mathsf{R}=(R^0,...,R^{t-1})$. Then,
$\mathit{BacktrackAmount}(\mathsf{Q})=|\{i:i\in \{1,...,t-1\}\land R^i=\mathrm{BACKTRACK}\}|$.

\subsection{Components Theorem}
\label{theorem}

It was observed \cite{DBLP:conf/wea/Lopez-PresaA09} that conauto is very efficient finding
isomorphisms for unions of strongly regular graphs, but it is inefficient detecting that two
such unions are not isomorphic. Exploring the behavior of conauto in graphs that are the
disjoint union of connected components, we observed that it was not able to identify cases
in which components in both graphs had already been matched. This was leading to many redundant
attempts of matching components.

Note that, once a component $C_G$ of a graph $G$ has been found isomorphic to a component $C_H$
of a graph $H$, it is of no use trying to match $C_G$ to another component of $H$. Besides, if
$C_G$ can not be matched to any component of $H$, it is of no use trying to match the other components,
since, at the end, the graphs can not be isomorphic. 
After a thorough study of the behavior of conauto for these graphs, we have
concluded that its performance can be drastically improved in these cases by directly
applying the following theorem (whose proof can be found in the Appendix):

\begin{theorem}
\label{main-theorem}
During the search for a sequence of partitions compatible with the target, backtracking
from a level $l$ to a level $k<l$, such that each cell of level $l$ is contained in a
different cell of level $k$, can not provide a compatible partition.
\end{theorem}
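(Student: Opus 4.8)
The plan is to reinterpret the hypothesis as the statement that no cell is split between levels $k$ and $l$, to extract from this a regularity among the candidate pivot vertices, and then to conclude through Theorem~\ref{iso-iif-seq} that the branch explored by the backtracking is isomorphic to the one already examined and hence useless.

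Let $H$ be the graph whose sequence of partitions $\mathcal{T}^0,\dots,\mathcal{T}^t$ is being built to match the target sequence of $G$. First I would note that every refinement step produces a partition finer, on its surviving vertices, than its predecessor, so $\mathcal{T}^l$ refines $\mathcal{T}^k$ on the set $V^l$ of vertices still present at level $l$, and each cell of $\mathcal{T}^l$ lies in a unique cell of $\mathcal{T}^k$. The hypothesis that distinct cells of level $l$ lie in distinct cells of level $k$ makes this containment injective, which together with the refinement property forces $\mathcal{T}^l$ to coincide with $\mathcal{T}^k$ restricted to $V^l$: between the two levels the refinements split no surviving cell, and merely individualized successive pivots and discarded vertices that lost all their links. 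In particular, since backtracking re-chooses a pivot at level $k$, the pivot cell $P=T^k_{P^k}$ has more than one vertex, and refining it by its pivot $v$ left $P\setminus\{v\}$ a single cell; by the definition of $\mathit{PartitionByVertex}$ every vertex of $P\setminus\{v\}$ then has the same $\mathit{ADeg}$ toward $v$, and the same uniformity holds for each cell against each pivot used on the way from $k$ to $l$. This is exactly the regularly-connected situation the theorem targets, and by Definition~\ref{def-compat-seq} the same structure is forced on the target sequence of $G$.

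The core step is to turn this uniformity into an equivalence. I would fix an arbitrary alternative pivot $v'\in P$ that the backtracking would try at level $k$ in place of $v$, and show $v\equiv v'$ in $H$. Because $v$ and $v'$ lie in the same cell of $\mathcal{T}^k$ and the refinement by $v$ produced no split, the bijection that exchanges the roles of $v$ and $v'$ fixes every cell of $\mathcal{T}^k$ setwise and matches the two individualized pivots; I would then verify that it preserves every arc type, both inside $P$ and between $P$ and the remaining cells, so that it is an automorphism of $H$ witnessing $v\equiv v'$. Showing that the local no-split data genuinely assemble into a global arc-preserving map, rather than a mere cell-preserving permutation, is the step I expect to be the main obstacle, since it is here that the full force of the regular-connection hypothesis is needed.

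Finally I would conclude. The automorphism sending $v$ to $v'$ transports the whole refinement carried out from level $k$ with pivot $v$ onto the refinement carried out with pivot $v'$, so the two choices yield isomorphic sequences of partitions for $H$; and as an automorphism of $H$ preserves compatibility with the fixed target sequence of $G$, one branch is compatible with the target if and only if the other is, by Theorem~\ref{iso-iif-seq}. Since backtracking from $l$ to $k$ is invoked precisely because the current branch admits no compatible extension, the re-chosen branch admits none either; hence the backtracking can not provide a compatible partition, which is the claim.
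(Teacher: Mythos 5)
There is a genuine gap, and it sits exactly where you flag ``the main obstacle'': the construction of the arc-preserving map between the two branches is the entire mathematical content of Theorem~\ref{main-theorem}, and your proposal does not supply it. Before that, your reading of the hypothesis is too strong. ``Each cell of level $l$ is contained in a different cell of level $k$'' does give $S^l_i=S^k_i\cap V^l$, but it does \emph{not} mean that the refinements between $k$ and $l$ split nothing; cells may well be split at intermediate levels, provided all but one of the resulting pieces are consumed (used as singleton pivots, or discarded for losing links) before level $l$. In particular, your claim that refining by $v$ ``left $P\setminus\{v\}$ a single cell'' is false in general --- in the motivating family of unions of strongly regular graphs, the first vertex refinement immediately splits every cell into neighbours and non-neighbours of the pivot. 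Under your stronger reading the theorem would be nearly vacuous for exactly the graphs it is designed for. The whole difficulty is that the two branches discard \emph{different} vertices on the way from $k$ to $l$, and this is what the paper's decomposition of each $S^k_i$ into $A_i$ (discarded in both branches), $B_i$ (discarded only in the explored branch), $C_i$ (discarded only in the alternative branch) and $D_i$ (surviving in both) is for: Lemmas~\ref{uno-a-todos} and~\ref{Bi-Cj-todos-a-todos} pin down the uniform adjacency pattern among these blocks, Lemma~\ref{Gb-iso-Gc} shows $G_B\simeq G_C$ blockwise, and Lemma~\ref{GVl-iso-GWl} assembles these into an isomorphism of $G_{V^l}$ and $G_{W^l}$ matching $S^l_i$ to $T^l_i$. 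None of this is replaced by the observation that the level-$l$ cells inject into the level-$k$ cells.

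A secondary, structural point: you aim to produce an automorphism of the whole graph $H$ carrying $v$ to $v'$, and then invoke Theorem~\ref{iso-iif-seq}. That is a strictly stronger statement than what is needed or what the paper proves. The continuation of the search below level $l$ depends only on the induced subgraph on the surviving vertices together with the current partition, so it suffices to show that the two surviving configurations $(G_{V^l},\mathcal{S}^l)$ and $(G_{W^l},\mathcal{T}^l)$ are isomorphic cell-to-cell; a global automorphism would additionally have to act consistently on every vertex individualized or discarded before level $k$ and on all cross-adjacencies to them, which your sketch neither needs nor establishes. If you want to pursue your route, you should (i) drop the ``no split'' reading and work with the $A_i,B_i,C_i,D_i$ decomposition (or an equivalent), and (ii) either downgrade the target to an isomorphism of the surviving induced subgraphs, or do the extra work of extending it to all of $H$.
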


\section{Conauto-1.2}
\label{conauto-1.2}

In this section we propose a new algorithm conauto-1.2 (described in Algorithm~\ref{are-isomorphic}) which
is based on algorithm conauto \cite{DBLP:conf/wea/Lopez-PresaA09}, and uses the result of Theorem~\ref{main-theorem} to drastically reduce
backtracking. It starts generating a sequence of partitions for each of the graphs being tested (using
function $\mathit{GenerateSequenceOfPartitions}$), and performing a limited search for automorphisms using function
$\mathit{FindAutomorphisms}$, just like conauto. The difference with conauto is that, during the search
for the compatible sequence of partitions ($\mathit{Match}$), the algorithm not always backtracks to the
previous recursive call (the previous level in the sequence of partitions). Instead, it may backtrack directly
to a much higher level, or even stop the search, concluding that the graphs are not isomorphic, skipping
intermediate backtracking points.

Function $\mathit{GenerateSequenceOfPartitions}$ is the same used by conauto (see \cite{DBLP:conf/wea/Lopez-PresaA09}
for the details). It is worth mentioning that it generates a sequence of partitions with the following criteria:
\begin{packed_enum}
\item
It starts with the degree partition, and ends when it gets a partition in which no non-singleton cell
has remaining links.
\item
The \emph{pivot cell} used for a refinement must always have remaining links (the more, the better).
\item
At each level, a vertex refinement with a singleton pivot cell is the preferred choice.
\item
The second best choice is to perform a set refinement, preferring small cells over big ones.
\item
If the previous refinements can not be used, then a vertex is chosen from the pivot cell
(the smallest cell with links), a vertex refinement is performed with that \emph{pivot vertex},
and a backtracking point arises.
\end{packed_enum}

Function $\mathit{FindAutomorphisms}$ is also the same used by conauto (see \cite{DBLP:conf/wea/Lopez-PresaA09}
for the details). It takes as input a sequence
of partitions for a graph, and generates an extended sequence of partitions. In the process, it tries
to eliminate backtracking points, and builds a semiorbit partition of the vertices with the information
on vertex equivalences it gathers. Recall that two vertices are equivalent if there is an automorphism that permutes
them, i.e., if there are two equivalent sequences of partitions in which one takes the place of the other.

\begin{Algorithm}{
\caption{Test whether $G$ and $H$ are isomorphic ({\em conauto-1.2}).}
\label{are-isomorphic}
\begin{footnotesize}
\begin{algo}{$\mathit{AreIsomorphic}(G,H):\mathrm{boolean}$}
\> $\mathsf{Q}_G \leftarrow \mathit{GenerateSequenceOfPartitions}(G)$ \nl 
$\mathsf{Q}_H \leftarrow \mathit{GenerateSequenceOfPartitions}(H)$ \nl

\> $\mathsf{E}_G \leftarrow \mathit{FindAutomorphisms}(G,\mathsf{Q}_G)$ \nl
$\mathsf{E}_H \leftarrow \mathit{FindAutomorphisms}(H,\mathsf{Q}_H)$ \nl
\> {\bf if} $\mathit{BacktrackAmount}(\mathit{SeqPart}(\mathsf{E}_G)) \leq \mathit{BacktrackAmount}(\mathit{SeqPart}(\mathsf{E}_H))$ {\bf then}\nl
\>\> {\bf return} $0\le\mathit{Match}(0,G,H,\mathit{SeqPart}(\mathsf{E}_G),\mathit{Orbits}(\mathsf{E}_H))$ \nl
\> {\bf else} \nl
\>\> {\bf return} $0\le\mathit{Match}(0,H,G,\mathit{SeqPart}(\mathsf{E}_H),\mathit{Orbits}(\mathsf{E}_G))$ \nl
\> {\bf end if}
\end{algo}
\end{footnotesize}
\BB\BB
}
\end{Algorithm}

Function $\mathit{Match}$ (Algorithm~\ref{match}) uses backtracking attempting to find a sequence of partitions
for graph $H$ that is compatible with the one for graph $G$. At backtracking points, it tries every feasible
vertex in the pivot cell, so that no possible solution is missed.

Note that, unlike in conauto, the function $\mathit{Match}$ of conauto-1.2 does not return a boolean, but an
integer. Thus, if
$\mathit{Match}$ returns $-1$, that means that a mismatch has been found at some level $l$, such that there
is no previous level $l'$ at which a cell contains (at least) two cells of the partition of level $l$. Hence,
from Theorem~\ref{main-theorem} there is no other feasible alternative in the search space that can yield an
isomorphism of the graphs. If it returns a value that is higher than the current level, then a match has been
found, the graphs are isomorphic and there is no need to continue the search. Therefore, in this case the
call immediately returns with this value. If it returns a value that is lower than the current level, then it is
necessary to backtrack to that level, since trying another option at this level is meaningless according to
Theorem~\ref{main-theorem}. Hence the algorithm also returns immediately with that value. If a call at level
$l$ returns $l$, then another alternative at this level $l$ should be tried if possible. In any other case,
it applies Theorem~\ref{main-theorem} directly, and returns the closest (previous) level $l'$ at which two cells
of the current level $l$ belong to the same cell of $l'$. If no such previous level exists, it returns $-1$.

\begin{Algorithm}{
\caption{Find a sequence of partitions compatible with the target.}
\label{match}
\begin{footnotesize}
\begin{algo}{$\mathit{Match}(l,G,H,\mathsf{Q}_G,\mathsf{O}_H):\mathrm{integer}$}
\> {\bf if} partition labeled $\mathrm{FIN}$ {\bf \em and} the  adjacencies in both partitions match\nl
\>\> return $l$ \nl
\> {\bf else if} partition labeled $\mathrm{VERTEX}$ {\bf \em and} vertex refinement compatible {\bf then} \nl
\>\> $l' \longleftarrow \mathit{Match}(l+1,G,H,\mathsf{Q}_G,\mathsf{O}_H)$ \nl
\>\> {\bf if} $l \ne l'$ {\bf then} {\bf return} $l'$ \nl
\> {\bf else if} partition labeled $\mathrm{SET}$ {\bf \em and} set refinement compatible {\bf then} \nl
\>\> $l' \longleftarrow \mathit{Match}(l+1,G,H,\mathsf{Q}_G,\mathsf{O}_H)$ \nl
\>\> {\bf if} $l \ne l'$ {\bf then} {\bf return} $l'$ \nl
\> {\bf else if} partition labeled $\mathrm{BACKTRACK}$ {\bf then} \nl
\>\> {\bf for each} vertex $v$ in the pivot cell, while NOT success {\bf do} \nl
\>\>\> {\bf if} $v$ may NOT be discarded according to $\mathsf{O}_H$ {\bf \em and} vertex refinement compatible {\bf then} \nl
\>\>\>\> $l' \longleftarrow \mathit{Match}(l+1,G,H,\mathsf{Q}_G,\mathsf{O}_H)$ \nl
\>\>\>\> {\bf if} $l \ne l'$ {\bf then} {\bf return} $l'$ \nl
\>\>\> {\bf end if} \nl
\>\> {\bf end for} \nl
\> {\bf end if} \nl
\> {\bf return} the nearest level $l'$ such that the condition of Theorem~\ref{main-theorem} holds
\end{algo}
\end{footnotesize}
\BB\BB
}
\end{Algorithm}

\section{Performance Evaluation}
\label{performance}

In this section we compare the practical performance of conauto-1.2 with nauty and bliss, two
well-known algorithms that are considered the fastest algorithms for isomorphism testing and canonical
labeling. In the performance evaluation experiments, we have run these programs with instances
(pairs of graphs) that belong to specific families. We also use conauto to show the improvement
achieved by conauto-1.2 for these graph families. Undirected and directed (when possible) graphs of different sizes
(number of nodes) have been considered. The experiments include instances of isomorphic and
non-isomorphic pairs of graphs.

\subsection{Graph Families}
\label{graph-cons}

For the evaluation, we have built some families of graphs with regularly-connected 
components. The general construction technique of these graphs consists of
combining small components of different types by either (1) connecting every vertex of
each component to all the vertices of the other components, (2) connecting only some
vertices in each component to some vertices in all the other
components, or (3) applying the latter construction in two levels.
The use of these techniques guarantees that the resulting graph is connected, which
is convenient to evaluate algorithms that require connectivity (like, e.g., vf2 \cite{Cor01vf2}).
Using the disjoint union of connected components yields similar experimental results.

Next, we describe each
family of graphs used. In fact, as the reader will easily infer, the key point in
all these constructions is that the components are either disconnected, or connected
via complete $n$-partite graphs. Hence, multiple other constructions may be used
which would yield similar results. In each graph family, one hundred pairs of isomorphic and
non-isomorphic graphs have been generated for each graph size (up to approximately $1,000$ vertices).

\paragraph{Unions of Strongly Regular Graphs}
\label{USR-graphs}

This graph family is built from a set of $20$ strongly regular graphs with parameters
$(29,14,6,7)$ as components. The components are interconnected so that each vertex in one component
is connected to every vertex in the other components. This is equivalent to inverting
the components, then applying the disjoint union, and finally inverting the result. Graphs
up to $20\times29=580$ vertices have only one copy of each component, and bigger ones
may have more than one copy of each component. 

\paragraph{Unions of Tripartite Graphs}
\label{tripartite-graphs}

For this family, we use the digraphs in Figure~\ref{tripartitos-fig} as the basic
components. For the positive tests (isomorphic graphs) we use the same number of components
of each type, while for the negative tests we use one graph with the same number of components
of each type, and another graph in which one component has been replaced by one of the
other type.

The connections between components have been done in the following way. The vertices in the
$A$ subset of each component are connected to all the vertices in the $B$ subsets of the
other components. See Figure~\ref{tripartitos-fig} to locate these subsets. The arcs are
directed from the vertices in the $A$ subsets, to the vertices in the $B$ subsets.
From the previously described graphs, we have obtained an undirected version by transforming
every (directed) arc into an (undirected) edge. 

\begin{figure*}[tb!]
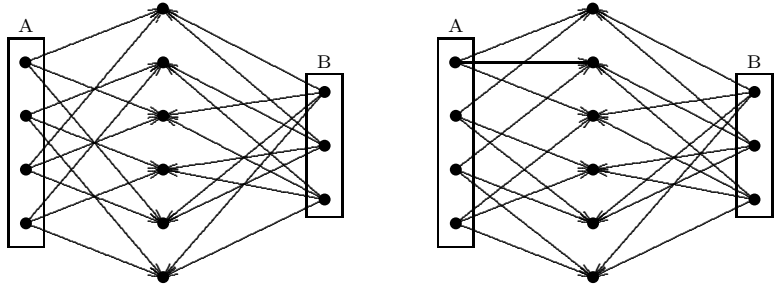

\centerline{\input tripartitos.pictex}
\caption{Tripartite graphs used as components.}
\label{tripartitos-fig}
\end{figure*}

\paragraph{Hypo-Hamiltonian Graphs 2-level-connected}
\label{CHH-graphs}

For this family we use two non-isomorphic Hypo-Hamiltonian graphs with 22 vertices. Both graphs
have four orbits
of sizes: one, three, six, and twelve. These basic
components are interconnected at two levels. Let us call the vertices in the orbits of size one,
the {\em 1-orbit} vertices, and the vertices in the orbits of size three the {\em 3-orbit} vertices.
In the first level, we connect $n$ basic components, to form a {\em first-level component},
by connecting all the $3$-orbit vertices in each basic component to all the $3$-orbit vertices
of the other basic components. In this construction, the $3$-orbit vertices, along with the
new edges added to interconnect the $n$ basic components, form a complete $n$-partite graph.
Then, in the second level, $m$ first-level components are interconnected by adding edges that connect the $1$-orbit
vertices of each first-level component with all the $1$-orbit vertices of the other first-level
components. Again, the $1$-orbit vertices, along with the edges connecting them, form a complete
$m$-partite graph.
Since we use two Hypo-Hamiltonian graphs as basic components, to generate negative isomorphism cases,
a component of one type is replaced with one of the other type.


\subsection{Evaluation Results}

The performance of the four programs has been evaluated in terms of their execution time with multiple instances
of graphs from the previously defined families.  The execution times have been measured in a Pentium III at 1.0
GHz with 256 MB of main memory, under Linux RedHat $9.0$. The same compiler (GNU gcc) and the same optimization
flag (-O) have been used to compile all the programs. The time measured is the real execution time (not only CPU
time) of the programs. This time does not include the time to load the graphs from disk into memory. A time limit
of $10,000$ seconds has been set for each execution. When the execution of a program with graphs of size $s$ reaches
this limit, all the execution data of that program for graphs of the same family with size no smaller than $s$ are
discarded.

\paragraph{Average Execution Time}

The results of the experiments are first presented, in Figure~\ref{average}, as curves that represent execution
time as a function of graph size. In these curves, each point is the average execution time of the corresponding
program on all the instances of the corresponding size.

\begin{figure*}[tb!]
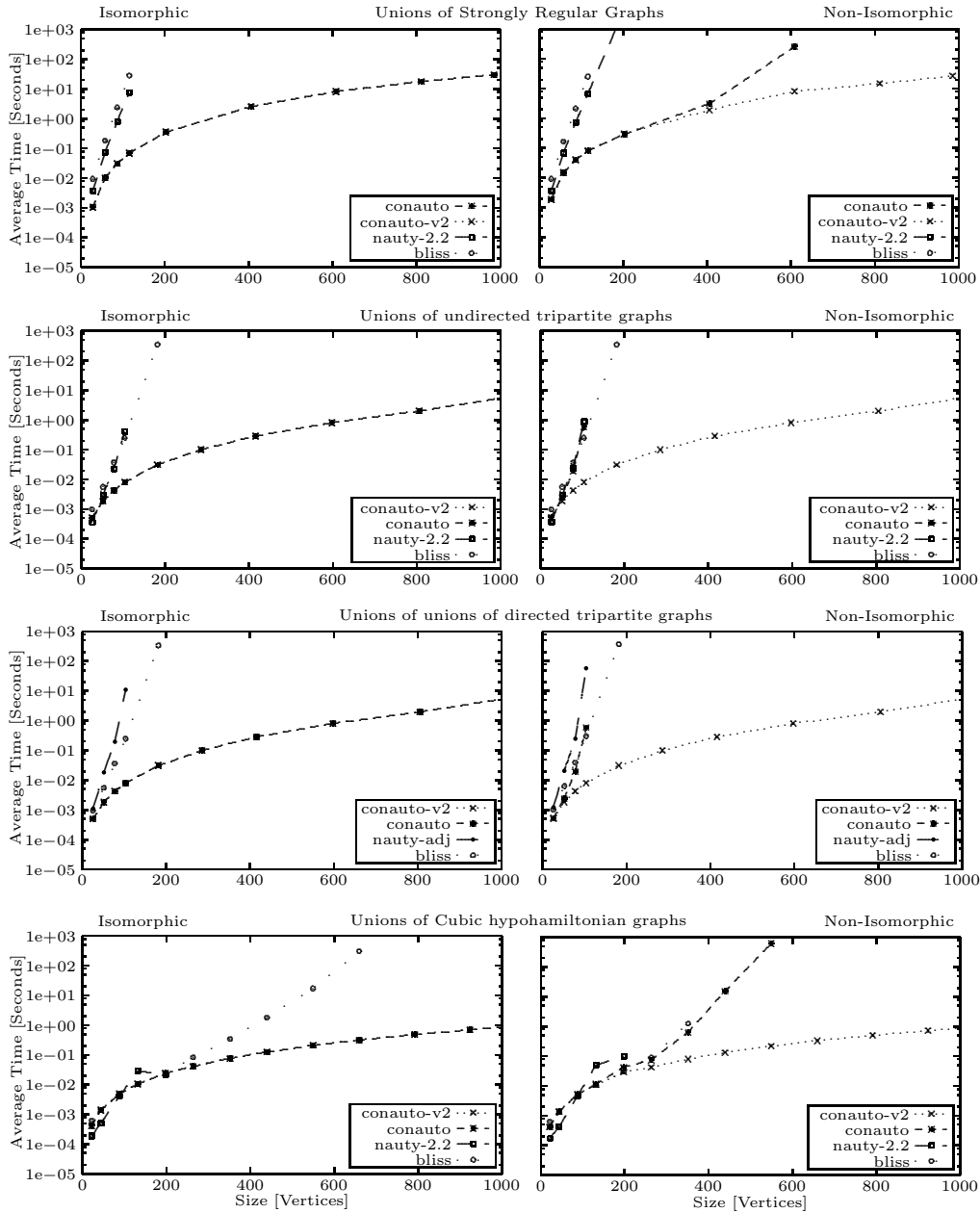

\centerline{\input media.pictex}
\caption{Average execution time.}
\label{average}
\end{figure*}


It was previously known that nauty requires exponential time to process graphs that are unions of strongly
regular graphs \cite{Miyazaki97}. From our results, we conjecture that bliss has the same problem. That does
not apply to conauto-1.2, though. While the original conauto had problems with non-isomorphic pairs of graphs,
conauto-1.2 overcomes this problem.



With the family of unions of tripartite graphs, we have run both positive and negative experiments with directed
and undirected versions of the graphs. In all cases, conauto-1.2 has a very low execution time. 
(Again, the improvement of conauto-1.2 over conauto is apparent in the case of negative tests.)
Observe that there are no significant differences in the execution times of bliss and conauto-1.2 between the directed
and the undirected cases. However, nauty is slower with directed graphs, even using the \emph{adjacencies} invariant
specifically designed for directed graphs.

Our last graph family, Cubic Hypohamiltonian 2-level-connected graphs, has a more complex structure than the other families, 
having two levels of interconnection. However, the results do
not differ significantly from the previous ones. It seems that these graphs are a bit easier to process (compared
with the other graph families) for bliss, but not for nauty. Like in the previous cases, conauto-1.2
is fast and consistent with the graphs in this family. It clearly
improves the results of conauto for the non-isomorphic pairs of graphs.


\paragraph{Standard Deviation}

In addition to the average behavior for each graph size, we have also evaluated the regular behavior of the
programs. With regular behavior we mean that the time required to process
any pair of graphs of the same family and size is very similar. 
We have observed that conauto-1.2 is not only fast for all these families of graphs,
but it also has a very regular behavior.
However, that does not hold for nauty
nor bliss. This is illustrated with the plots of the \emph{normalized standard deviation}\footnote{The
normalized standard deviation is obtained by dividing the standard deviation of the sample by the mean.}
(NSD) shown in Figure~\ref{desv}. Algorithm conauto-1.2 has a NSD that remains almost constant, and very
close to cero, for all graph sizes, and even decreases for larger graphs. However, nauty and bliss have
a much more erratic behavior. In the case of conauto, we see that its problems arise when it faces negative
tests, where the NSD rapidly grows.

\begin{figure*}[tb!]
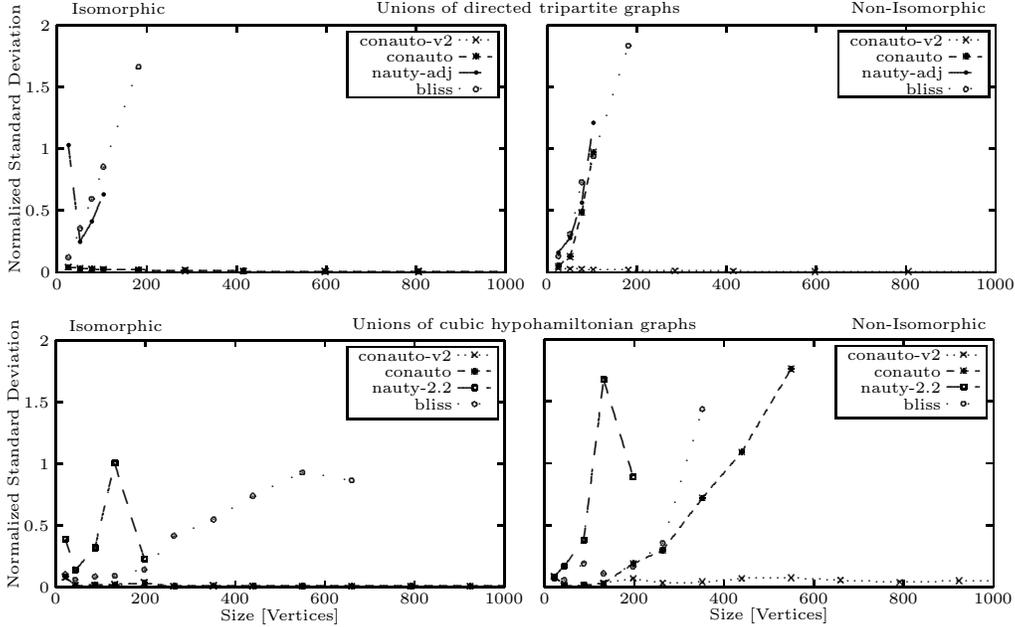

\centerline{\input desv.pictex}
\caption{Normalized Standard Deviation of execution times.}
\label{desv}
\end{figure*}



\section{Conclusions and Future Work}
\label{future}

We have presented a result (the Components Theorem, Theorem~\ref{main-theorem}) that can be applied in GI algorithms to efficiently find automorphisms. 
Then, we have applied this result to transform the algorithm conauto into conauto-1.2.
Algorithm conauto-1.2 has been shown to be fast and consistent in performance for a variety of graph families.
However, the algorithm conauto-1.2 can still be improved
in several ways: (1) by adding the capability of computing a complete set of generators for the automorphism
group, (2) by making extensive use of discovered automorphisms during the match process, and (3) by computing
canonical forms of graphs. In all these possible improvements, the Components Theorem will surely
help. Additionally, the Components Theorem might also be used by nauty and bliss to improve their
performance for the graph families considered, at low cost.


\newpage
\appendix

\section{Proof of the Components Theorem (Theorem~\ref{main-theorem})}

The following definition will be needed in the proof.

\begin{definition}
Let $G=(V,R)$ be a graph. Let $V'\subseteq V$. Then the {\em subgraph induced}
by $V'$ on $G$, denoted $G_{V'}$, is the graph $H=(V',R')$ such that
$R'=\{(u,v) : u,v \in V' \land (u,v) \in R\}$.
\end{definition}

A backtracking point arises when a partition does not have singleton cells (suitable
for a vertex refinement) and it is not possible to refine such partition by means of
a set refinement. Let us introduce a new concept that will be useful in the following
discussion.

\begin{definition}
\label{def-equitable}
Let $G=(V,R)$ be a graph, and let $\mathcal{S}=(S_1,...,S_r)$ be a partition of
$V$. $\mathcal{S}$ is said to be {\em equitable} (with respect to $G$) if for all
$i \in \{1,...,r\}$, for all $u,v \in S_i$, for all $j \in \{1,...,r\}$,
$\mathit{ADeg}(u,S_j,G) = \mathit{ADeg}(v,S_j,G)$.
\end{definition}

\begin{observation}
\label{sk-equitable}
The partition at a backtracking point is equitable.
\end{observation}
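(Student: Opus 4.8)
The plan is to show that the defining condition of a backtracking point is logically equivalent to the partition admitting \emph{no} proper set refinement, and that the existence of a proper set refinement is precisely the negation of being equitable. I would therefore argue by contraposition: assuming the partition at the backtracking point is not equitable, I would exhibit a pivot cell whose set refinement splits some cell, contradicting the very reason a backtracking point arises. Only this one direction is needed for the statement, so I would not belabor any converse.

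First I would make the backtracking condition precise. By the pivot-selection criteria listed for $\mathit{GenerateSequenceOfPartitions}$ and the tagging convention, a $\mathrm{BACKTRACK}$ refinement is applied at $\mathcal{S}=(S_1,\dots,S_r)$ only when neither a vertex refinement with a singleton pivot cell nor a set refinement with any pivot cell that has remaining links produces a proper refinement of $\mathcal{S}$. For the proof I only need the second half: at a backtracking point, for every candidate pivot cell $S_x$ with remaining links, $\mathit{SetRefinement}(\mathcal{S},S_x,G)$ leaves every cell of $\mathcal{S}$ unsplit. The main step is then the contrapositive argument. Suppose $\mathcal{S}$ is not equitable. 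By Definition~\ref{def-equitable} there are indices $i,j$ and vertices $u,v\in S_i$ with $\mathit{ADeg}(u,S_j,G)\ne\mathit{ADeg}(v,S_j,G)$. I would take $P=S_j$ as the pivot set. Since these two available degrees differ, at least one of them is not $(0,0,0)$, so at least one of $u,v$ has a neighbor in $S_j$; consequently some vertex of $S_j$ has a neighbor in $V$, so $S_j$ has remaining links and is a legitimate pivot. By the definition of $\mathit{PartitionBySet}(S_i,S_j,G)$, the vertices of $S_i$ are separated according to their available degree with respect to $S_j$, and because $u$ and $v$ carry distinct such degrees they fall into different cells. Thus the set refinement by $S_j$ properly splits $S_i$, so a set refinement \emph{is} possible, contradicting the characterization of the backtracking point. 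Hence $\mathcal{S}$ is equitable.

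The main obstacle I anticipate is not the computation, which is immediate, but tying the informal statement ``it is not possible to refine the partition by means of a set refinement'' to a precise quantified claim over all cells, together with the bookkeeping around linkless cells. The refinement operators discard a cell $S_i$ whenever $\lnot\mathit{HasLinks}(S_i,V,G)$, so I must be careful that the split witnessed above is a genuine refinement rather than an artifact of dropping a cell, and that the chosen pivot $S_j$ really carries links. Both points are resolved by the single observation that differing available degrees to $S_j$ force an adjacency between $S_i$ and $S_j$; this simultaneously certifies that $S_j$ is a valid pivot and that the refinement splits $S_i$ into at least two cells, which is what makes it proper.
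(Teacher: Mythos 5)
Your proof is correct and follows essentially the same route as the paper's: assume the partition is not equitable, exhibit a cell $S_j$ and vertices $u,v\in S_i$ with different available degrees, and observe that the set refinement with pivot $S_j$ would split $S_i$, contradicting the definition of a backtracking point. The extra care you take to certify that $S_j$ is a legitimate pivot and that $S_i$ is not silently discarded as linkless is a small but welcome tightening of the paper's terser argument.
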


\begin{proof}
Assume otherwise. Then, there exists some $S_j$ such that there are two vertices $u,v$ in some
$S_i$, such that $\mathit{ADeg}(u,S_j,G) \ne \mathit{ADeg}(v,S_j,G)$. Therefore,
it would be possible to perform a set refinement on the partition, using $S_j$ as
the pivot cell, and vertices $u$ and $v$ would be distinguished by this refinement,
and cell $S_i$ would be split. This is not possible since, at a backtracking point,
no set refinement has succeeded.
\end{proof}

\begin{observation}
\label{G-S-l-x-regular}
Let $l$ be a backtracking level. Let $\mathcal{S}^l=(S^l_1,...,S^l_r)$ be the partition at
that level. Then, for all $i \in \{1,...,r\}$, $G_{S^l_i}$ is regular.
\end{observation}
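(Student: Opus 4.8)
The plan is to derive regularity directly from equitability, which Observation~\ref{sk-equitable} already establishes at any backtracking level. So the first thing I would do is invoke Observation~\ref{sk-equitable} to assert that the partition $\mathcal{S}^l$ is equitable with respect to $G$. The entire argument then reduces to reading Definition~\ref{def-equitable} in the special case where the cell under consideration and the ``target'' cell coincide, i.e.\ setting $j=i$.

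Concretely, for the fixed cell $S^l_i$ and any two vertices $u,v \in S^l_i$, equitability with $j=i$ yields $\mathit{ADeg}(u,S^l_i,G) = \mathit{ADeg}(v,S^l_i,G)$. The key step is to recognize that this common 3-tuple is precisely the local degree information of the induced subgraph $G_{S^l_i}$: since $G_{S^l_i}$ retains exactly those arcs of $G$ whose endpoints both lie in $S^l_i$, the components $(D_3,D_2,D_1)$ of $\mathit{ADeg}(w,S^l_i,G)$ count, respectively, the bidirectional, outgoing-only, and incoming-only neighbors of $w$ inside $G_{S^l_i}$, for every $w \in S^l_i$. Hence all vertices of $G_{S^l_i}$ share the same in-degree, out-degree, and number of incident bidirectional arcs, which is exactly what it means for the (directed) graph $G_{S^l_i}$ to be regular. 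The degenerate cases are immediate and need no separate treatment: a singleton cell induces a one-vertex subgraph (trivially regular), and a cell with no internal links induces an edgeless subgraph ($0$-regular), and in both the common-value condition holds vacuously.

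The only real subtlety — and hence the ``main obstacle,'' though a mild one — is purely definitional: one must check that the available-degree 3-tuple captures \emph{all} degree data relevant to regularity of a directed graph (bidirectional, in-, and out-arcs), and that constancy of this tuple across a cell is not merely necessary but also sufficient for regularity of the induced subgraph. Once the correspondence between $\mathit{ADeg}(\cdot,S^l_i,G)$ restricted to $S^l_i$ and the degrees in $G_{S^l_i}$ is made explicit, the claim follows immediately from Observation~\ref{sk-equitable} with no further computation.
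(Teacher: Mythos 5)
Your proof is correct and follows exactly the paper's argument: invoke Observation~\ref{sk-equitable} for equitability, specialize Definition~\ref{def-equitable} to $j=i$, and conclude regularity of $G_{S^l_i}$. The extra care you take in identifying $\mathit{ADeg}(\cdot,S^l_i,G)$ with the degree data of the induced subgraph is a reasonable elaboration of a step the paper leaves implicit, not a different route.
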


\begin{proof}
From Observation \ref{sk-equitable}, $\mathcal{S}^l$ is equitable. Fix $i \in \{1,...,r\}$, then,
from Definition \ref{def-equitable}, for all $u,v \in S^l_i$, $\mathit{ADeg}(u,S^l_i,G)=\mathit{ADeg}(v,S^l_i,G)$.
Therefore, $G_{S^l_i}$ is regular, for all $i \in \{1,...,r\}$.
\end{proof}

Let $\mathsf{Q}=(\mathsf{S},\mathsf{R},\mathsf{P})$ be a sequence of partitions
for graph $G=(V,R)$ where $\mathsf{S}=(\mathcal{S}^0,...,\mathcal{S}^t)$,
$\mathsf{R}=(R^0,...,R^{t-1})$, and $\mathsf{P}=(P^0,...,P^{t-1})$. For all $i \in \{0,...,t\}$
let $\mathcal{S}^i=(S^i_1,...,S^i_{r_i})$, and $V^i=\bigcup_{j=1}^{r_i} S^i_j$.
We consider two backtracking levels $k$ and $l$ that satisfy the preconditions of
Theorem~\ref{main-theorem}, i.e., $k<l$ and each cell of $\mathcal{S}^l$ is contained
in a different cell of $\mathcal{S}^k$.

Let $p \in S^k_{P^k}$ be the pivot vertex used for the vertex refinement at level $k$.
Assume there is a vertex $q \in S^k_{P^k}, q \ne p$ that satisfies the following.
$\mathcal{T}^{k+1}=\mathit{VertexRefinement}(\mathcal{S}^k,q,G_{V^k})$ is a partition
that is compatible with $\mathcal{S}^{k+1}$.  Let $\mathcal{T}^{k+1}=(T^{k+1}_1,...,T^{k+1}_{r_{k+1}})$,
$W^{k+1}=\bigcup_{j=1}^{r_{k+1}} T^{k+1}_j$.  For all $i \in \{k+2,...,l\}$, let
$\mathcal{T}^i=(T^i_1,...,T^i_{r_i})$ be compatible with $\mathcal{S}^i$, where
$W^i=\bigcup_{j=1}^{r_i} T^i_j$,
$\mathcal{T}^i=\mathit{SetRefinement}(\mathcal{T}^{i-1},T^{i-1}_{P^{i-1}},G_{W^{i-1}})$
if $R^{i-1}=\mathrm{SET}$, and $\mathcal{T}^i=\mathit{VertexRefinement}(\mathcal{T}^{i-1},v,G_{W^{i-1}})$
for some $v \in T^{i-1}_{P^{i-1}}$ if $R^{i-1} \ne \mathrm{SET}$.
This generates an alternative sequence of partitions that is compatible with the original
one up to level $l$.

Under these premises, we show in the rest of the section that $G_{V^l}$ and $G_{W^l}$
are isomorphic, and there is an isomorphism of them that matches the vertices in $S^l_i$
to the vertices in $T^l_i$ for all $i\in\{1,...,r_l\}$.

To simplify the notation, let us assume $r_k=r_l=r$. Note that in this case, for all
$i\in\{1,...,r\}$, $S^l_i \subseteq S^k_i$. In case $r_k \ne r_l$ this correspondence
is not trivial. However, we can safely assume that there may be some $S^l_i \in \mathcal{S}^l$
that are empty, and develop our argument considering this possibility, although we
know that in the real sequence of partitions, these empty cells would have been discarded.

For all $i\in\{1,...,r\}$, let $E_i=S^k_i \setminus S^l_i$, $E'_i=S^k_i \setminus T^l_i$
be the vertices discarded in the refinements from $S^k_i$ to $S^l_i$ and $T^l_i$ respectively,
let $A_i=E_i\cap E'_i$ be the vertices discarded in both alternative refinements,
$B_i=E_i \setminus A_i$ the vertices discarded only in the refinement from $S^k_i$ to $S^l_i$,
$C_i=E'_i \setminus A_i$ the vertices discarded only in the refinement from $S^k_i$ to $T^l_i$,
and $D=S^l_i \cap T^l_i$ the vertices remaining in both alternative partitions at level $l$.
Let $A=\bigcup_{i=1}^r A_i$, $B=\bigcup_{i=1}^r B_i$, $C=\bigcup_{i=1}^r C_i$, $D=\bigcup_{i=1}^r D_i$,
$E=\bigcup_{i=1}^r E_i$, and $E'=\bigcup_{i=1}^r E'_i$. Clearly, $E=A\cup B$, and $E'=A\cup C$.
Observe that $|E_i|=|E'_i|$, and hence $|B_i|=|C_i|$ for all $i\in\{1,...,r\}$.

\begin{figure}[!h]
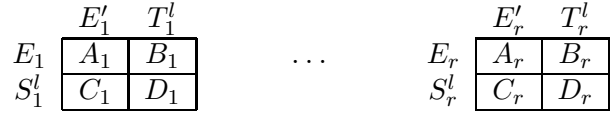

\begin{center}
\begin{tabular}{c|c|c|}
 \multicolumn{1}{c}{ } & \multicolumn{1}{c}{$E'_1$} & \multicolumn{1}{c}{$T^l_1$} \\
\cline{2-3} \multicolumn{1}{c|}{$E_1$} & $A_1$ & $B_1$ \\
\cline{2-3} \multicolumn{1}{c|}{$S^l_1$} & $C_1$ & $D_1$ \\
\cline{2-3}
\end{tabular}
\hspace{1cm}
{\large\ldots}
\hspace{0,7cm}
\begin{tabular}{c|c|c|}
 \multicolumn{1}{c}{ } & \multicolumn{1}{c}{$E'_r$} & \multicolumn{1}{c}{$T^l_r$} \\
\cline{2-3} \multicolumn{1}{c|}{$E_r$} & $A_r$ & $B_r$ \\
\cline{2-3} \multicolumn{1}{c|}{$S^l_r$} & $C_r$ & $D_r$ \\
\cline{2-3}
\end{tabular}
\end{center}
\label{division-S-k-i}
\caption{Partition of $S^k_i$ into subsets $A_i$, $B_i$, $C_i$, and $D_i$ for all $i\in\{1,...,r\}$.}
\end{figure}

\begin{observation}
\label{Ge-iso-Ge-prima}
$G_E$ is isomorphic to $G_{E'}$, and there is an isomorphism of them that matches the vertices
in $E_i$ to those in $E'_i$, for all $i\in\{1,...,r\}$.
\end{observation}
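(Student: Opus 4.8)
The plan is to exhibit \emph{compatible sequences of partitions} for the induced subgraphs $G_E$ and $G_{E'}$ and then invoke Theorem~\ref{iso-iif-seq}, which equates compatibility of sequences with isomorphism. The starting observation is that $E$ and $E'$ are precisely the sets of vertices that drop out of the active partition between level $k$ and level $l$: since $S^l_i\subseteq S^k_i$, the set $E_i=S^k_i\setminus S^l_i$ consists of the vertices descending from the $i$-th cell of $\mathcal{S}^k$ that are discarded along the original refinement (because the cell holding them loses all its links), and $E'_i$ is the analogous set for the alternative refinement. Because the two refinements proceed in lockstep --- they apply the same refinement type $R^i$ at the same pivot position $P^i$, and $\mathcal{T}^i$ is compatible with $\mathcal{S}^i$ for every $i\in\{k+1,\dots,l\}$ --- a cell is discarded in the original sequence if and only if the cell in the same position is discarded in the alternative sequence, at the same step; moreover the two discarded cells have equal size and equal available degree. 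This already pairs each discarded original cell with a discarded alternative cell descending from the same $\mathcal{S}^k$-cell, so the eventual isomorphism will carry $E_i$ onto $E'_i$.

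First I would record, step by step, the cells discarded by the original refinement and organise them into a sequence of partitions $\mathsf{Q}_E$ of $G_E$, and likewise $\mathsf{Q}_{E'}$ of $G_{E'}$. The key structural fact that makes this possible is that the edges internal to $E$ only run \emph{backward}: a cell discarded at a given step is link-free with respect to the entire active vertex set present at the step of its removal, which still contains every cell not yet discarded, so its only neighbours inside $V^k$ lie among cells discarded at earlier steps. Consequently the discarded cells form a layered graph in which each layer attaches only to earlier layers, and replaying the discards in the reverse of their removal order produces a genuine refinement chain of $G_E$ that terminates with every non-singleton cell link-free --- exactly the termination condition required of a sequence of partitions. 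The same replay applied to the alternative sequence yields $\mathsf{Q}_{E'}$, and the lockstep correspondence guarantees that the two chains have equal length, use identical refinement tags and pivot positions, and pass through pairwise-compatible partitions.

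Then I would verify the three conditions of Definition~\ref{def-compat-seq} for $\mathsf{Q}_E$ and $\mathsf{Q}_{E'}$: equal lengths and matching tags and pivots come directly from the lockstep construction; cellwise equality of sizes and available degrees transfers from the compatibility of $\mathcal{S}^i$ and $\mathcal{T}^i$ restricted to the discarded sub-cells; and the final mutual-adjacency condition follows because each discarded cell's available degree toward each earlier-discarded cell is identical in the two sequences. Applying Theorem~\ref{iso-iif-seq} then gives an isomorphism of $G_E$ and $G_{E'}$, and since the cell correspondence preserves descent from $\mathcal{S}^k$, this isomorphism matches $E_i$ to $E'_i$ for every $i\in\{1,\dots,r\}$. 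The main obstacle is precisely the second step: confirming that the discarded cells, with the adjacencies induced from $G$, assemble into \emph{valid} sequences of partitions for $G_E$ and $G_{E'}$ --- that the reverse-order replay really refines down to the required terminal form, and that the available-degree bookkeeping (not merely among singletons, but across whole discarded cells) is faithfully preserved between the two sequences. Once that is established, the isomorphism and its cell-respecting property follow immediately.
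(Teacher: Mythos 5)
The paper dispatches this observation in a single line (``direct from the construction of the sequences of partitions''), so any detailed argument is already more than it offers; the problem is that your detailed argument rests on a false characterization of $E$. You assert that every vertex of $E_i=S^k_i\setminus S^l_i$ is discarded ``because the cell holding them loses all its links,'' and from this you derive your key structural fact that edges internal to $E$ only run backward. This omits the second reason a vertex leaves the active set: being extracted as the pivot of a vertex refinement. The backtrack pivots $p$ and $q$ themselves --- the very vertices whose alternative choice creates the two paths --- belong to $E$ and $E'$, as does the pivot of every $\mathrm{VERTEX}$-tagged step between $k$ and $l$; and these vertices are chosen precisely because they still have links to the active set. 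A pivot discarded at step $i$ can therefore be adjacent to cells discarded at steps $j>i$, so the ``layered graph in which each layer attaches only to earlier layers'' picture is not guaranteed, and the reverse-order replay built on it need not produce a valid sequence of partitions for $G_E$. (The paper itself is explicit that there are two distinct discard reasons: see the two-case analysis inside the proof of Lemma~\ref{uno-a-todos}.)

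The omission matters because the two discard mechanisms contribute adjacency information in entirely different ways. Between two cells discarded for being link-free there are never any arcs at all: the earlier-discarded one is link-free with respect to an active set that still contains the later one. So for that part of $E$ the mutual-adjacency condition of Definition~\ref{def-compat-seq} holds trivially and there is nothing to ``transfer.'' All of the actual content of the observation sits in the adjacencies involving pivot vertices, and there the matching between the two paths comes not from link-freeness but from the sorting performed by $\mathit{PartitionByVertex}$: every cell surviving a vertex refinement has a uniform, recorded adjacency type to the pivot, and compatibility of the two refinements forces the corresponding cell on the other path to have the same type with respect to the corresponding pivot. Your proposal never invokes this mechanism, so the step you yourself flag as the main obstacle --- that the available-degree bookkeeping across whole discarded cells is faithfully preserved between the two sequences --- is exactly the step left unproved, and it cannot be recovered from the structural picture you set up.
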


\begin{proof}
Direct from the construction of the sequences of partitions.
\end{proof}

\begin{lemma}
\label{uno-a-todos}
Let $M=\mathit{Adj}(G)$. It is satisfied that:
\begin{packed_itemize}
\item For each $u \in E$, for all $i\in\{1,...,r\}$, for all $v,w \in S^l_i$, $M_{uv} = M_{uw}$ and $M_{vu} = M_{wu}$.
\item For each $u \in E'$, for all $i\in\{1,...,r\}$, for all $v,w \in T^l_i$, $M_{uv} = M_{uw}$ and $M_{vu} = M_{wu}$.
\end{packed_itemize}
\end{lemma}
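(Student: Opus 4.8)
The plan is to establish the first bullet (vertices $u\in E$ against the cells $S^l_i$) in full; the second (vertices $u\in E'$ against $T^l_i$) then follows by the verbatim argument applied to the alternative sequence $\mathcal{T}^{k+1},\dots,\mathcal{T}^l$, or from the correspondence supplied by Observation~\ref{Ge-iso-Ge-prima}. First I would note that the two claimed equalities are in fact equivalent: the adjacency encoding makes $M_{vu}$ a fixed involution of $M_{uv}$ (namely $0\mapsto 0$, $1\mapsto 2$, $2\mapsto 1$, $3\mapsto 3$), so $M_{uv}=M_{uw}$ holds if and only if $M_{vu}=M_{wu}$ does. Hence it suffices to prove, for each discarded $u$ and each cell $S^l_i$, that $u$ bears the same relationship to every vertex of $S^l_i$.

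The heart of the argument is a dichotomy on how a vertex leaves the partition between levels $k$ and $l$. Since $\mathcal{S}^k,\dots,\mathcal{S}^l$ form a refinement chain with non-increasing vertex sets $V^k\supseteq\dots\supseteq V^l$, a vertex $u\in E=\bigcup_i(S^k_i\setminus S^l_i)$ is removed at some step $m$ with $k\le m<l$, and the definitions of $\mathit{VertexRefinement}$ and $\mathit{SetRefinement}$ leave exactly two ways for this to happen: either (a) $u$ is the pivot vertex of a vertex refinement at level $m$ (the only operation that removes an individual vertex), or (b) $u$ lies in a cell discarded for having no links to $V^m$ (the only removal mechanism of a set refinement). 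In case (a), a vertex refinement by $u$ splits \emph{every} cell of $\mathcal{S}^m$ according to $\mathit{ADeg}(\cdot,\{u\},G)$, so each cell of $\mathcal{S}^{m+1}$ contains only vertices of a common arc type to $u$; as $S^l_i$ lies inside one such cell, all $v\in S^l_i$ share the same value $M_{vu}$. In case (b), $S^l_i\subseteq V^l\subseteq V^m$ and $u$ has no neighbor in $V^m$, so $M_{uv}=0$ for every $v\in S^l_i$; either way the relevant entries coincide.

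I expect the main obstacle to be bookkeeping rather than ideas: making the dichotomy exhaustive, and in each case certifying the containment that transfers uniformity to $S^l_i$ --- namely that $S^l_i$ sits inside a single cell of $\mathcal{S}^{m+1}$ in case (a), and inside $V^m$ in case (b). Both reduce to the single fact that each $\mathcal{S}^{i+1}$ refines $\mathcal{S}^i$ while deleting only the pivot vertex and the link-free cells, which is immediate from the two refinement definitions. Once this is in place, the uniformity of $u$ across each $S^l_i$, and hence the lemma, follows at once.
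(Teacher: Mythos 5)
Your proof is correct and takes essentially the same route as the paper's: the same exhaustive dichotomy on how a vertex leaves the partition chain (either as the pivot of a vertex refinement, which forces a constant adjacency type on every cell that survives to level $l$, or as a member of a link-free cell, which forces adjacency $0$ to all of $V^m \supseteq S^l_i$). The paper merely phrases this as a contradiction where you argue directly, and your preliminary remark that the two claimed equalities are equivalent under the fixed involution on adjacency types matches the paper's own convention.
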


\begin{proof}
Since none of the vertices in $E$ has been able to distinguish among the vertices in cell $S^l_i$,
each of the discarded vertices has the same type of adjacency with all the vertices in $S^l_i$.
Otherwise, consider vertex $u \in E$. Assume $u$ has at least two different types of adjacency
with the vertices in $S^l_i$. Since it was discarded during the refinements from $S^k_i$ to $S^l_i$,
that had to be for one of the following reasons:
\begin{packed_enum}
\item
It was discarded for having no links (i.e. links of type $0$), what is impossible since it has two different types of
adjacencies with the vertices in $S^l_i$.
\item
It was used as the pivot set in a vertex refinement, what is impossible since it would have
been able to split cell $S^l_i$.
\end{packed_enum}

The same argument applies to the vertices in $E'$ with respect to the vertices in each cell $T^l_i$.
\end{proof}

Consider the adjacency between vertex $u$ and vertex $v$ is $M_{uv}=a$ for some $a\in\{0,...,3\}$.
Then, we will denote the adjacency between $v$ and $u$ ($M_{vu}$) as $a^{-1}$. Note that if $a=0$,
$a^{-1}=0$, if $a=1$, $a^{-1}=2$, if $a=2$, $a^{-1}=1$, and if $a=3$, $a^{-1}=3$.

\begin{lemma}
\label{Bi-Cj-todos-a-todos}
For each $i,j\in\{1,...,r\}$, there is some $a\in\{0,...,3\}$ such that
for all $u\in B_i$, $v\in C_i$, $w\in D_i$, $u'\in B_j$, $v'\in C_j$, and $w'\in D_j$,
$M_{uv'}=M_{uw'}=M_{vu'}=M_{vw'}=M_{wu'}=M_{wv'}=a$ and $M_{u'v}=M_{u'w}=M_{v'u}=M_{v'w}=M_{w'u}=M_{w'v}=a^{-1}$.
\end{lemma}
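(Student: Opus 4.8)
The plan is to establish the six ``forward'' equalities --- that $M_{uv'}$, $M_{uw'}$, $M_{vu'}$, $M_{vw'}$, $M_{wu'}$, $M_{wv'}$ all share a common value $a$ --- and then obtain the six ``backward'' equalities for free. Indeed, by the definition of the adjacency encoding one always has $M_{yx}=(M_{xy})^{-1}$, and each backward entry (such as $M_{u'v}$) is the reverse of one of the six forward entries (here $M_{vu'}$); hence once the forward entries equal a common $a$, the backward ones automatically equal $a^{-1}$. So the whole lemma reduces to a single claim about the six forward adjacencies.

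First I would fix $i,j$ and peel off the easy coincidences with Lemma~\ref{uno-a-todos}. Since $B_i,B_j\subseteq E$ and $C_i,C_j\subseteq E'$, I apply the lemma four times. For $u\in B_i\subseteq E$ its first part gives a uniform adjacency from $u$ to the surviving cell $S^l_j=C_j\cup D_j$, so $M_{uv'}=M_{uw'}$; for $v'\in C_j\subseteq E'$ its second part gives a uniform adjacency into $v'$ from $T^l_i=B_i\cup D_i$, so $M_{uv'}=M_{wv'}$. Together these collapse $M_{uv'},M_{uw'},M_{wv'}$ to one value $\alpha$, independent of the chosen representatives. The mirror-image pair of applications (source $v\in C_i\subseteq E'$ toward $T^l_j=B_j\cup D_j$, and target $u'\in B_j\subseteq E$ from $S^l_i=C_i\cup D_i$) collapses $M_{vu'},M_{vw'},M_{wu'}$ to one value $\beta$. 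At this point the lemma is reduced to proving $\alpha=\beta$.

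This merge is the crux, and it cannot come from Lemma~\ref{uno-a-todos}, which only ever relates a discarded endpoint to a surviving cell and so keeps the $E$-side quantity $\alpha$ and the $E'$-side quantity $\beta$ apart. The extra ingredient is that level $l$ is a backtracking level, so by Observation~\ref{sk-equitable} the partition $\mathcal{S}^l$ is equitable; and because the alternative sequence performs the same (backtracking) refinement at level $l$ with a partition $\mathcal{T}^l$ compatible with $\mathcal{S}^l$, the same observation makes $\mathcal{T}^l$ equitable as well. In the generic situation I would compare, inside the equitable $\mathcal{S}^l$, a vertex $v\in C_i$ with a vertex $w\in D_i$ relative to the cell $S^l_j=C_j\cup D_j$: the vertex $v$ meets every vertex of $S^l_j$ with the single type $\beta$, so $\mathit{ADeg}(v,S^l_j,G)$ places all of its mass at type $\beta$; equitability forces $\mathit{ADeg}(w,S^l_j,G)$ to be the same tuple, yet $w$ meets $C_j$ with type $\alpha$, so the tuples can agree only if $\alpha=\beta$. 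This already disposes of every cell $j$ with $C_j\neq\emptyset$ (in which cell $i$ also supplies the needed representatives).

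The remaining obstacle --- and the part I expect to require the most care --- is the bookkeeping for degenerate cells, where some of $B_i,C_i,D_i,B_j,C_j,D_j$ are empty and the comparison above loses a needed representative. Here I would use $|B_\bullet|=|C_\bullet|$ so that the $B$- and $C$-parts of a cell vanish simultaneously (killing several of the six adjacencies at once), and I would route $\alpha$ and $\beta$ through the auxiliary quantity $\gamma=M_{D_i,D_j}$: equitability of $\mathcal{S}^l$ ties $\gamma$ to $\beta$ while equitability of $\mathcal{T}^l$ (whose cell $i$ is $B_i\cup D_i$) ties $\gamma$ to $\alpha$, again yielding $\alpha=\beta$; the few configurations in which even $\gamma$ is unavailable can be closed using the isomorphism $G_E\cong G_{E'}$ of Observation~\ref{Ge-iso-Ge-prima}. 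Once $\alpha=\beta=:a$ is secured in all cases, the forward claim holds and the backward claim follows as noted, completing the proof of Lemma~\ref{Bi-Cj-todos-a-todos}.
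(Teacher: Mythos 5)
Your reduction to the six forward entries, and your use of Lemma~\ref{uno-a-todos} to collapse them into the two values $\alpha$ (for $M_{uv'},M_{uw'},M_{wv'}$) and $\beta$ (for $M_{vu'},M_{vw'},M_{wu'}$), is exactly how the paper's proof begins, and that part is sound. The problem is your crux step for $\alpha=\beta$. You assert that a vertex $v\in C_i$ ``meets every vertex of $S^l_j$ with the single type $\beta$.'' But $S^l_j=C_j\cup D_j$, and Lemma~\ref{uno-a-todos} only controls $v$ in its role as a member of $E'$, i.e.\ it makes $v$'s adjacency uniform on $T^l_j=B_j\cup D_j$; it says nothing about $M_{vv'}$ for $v'\in C_j$, because $v'$ is likewise discarded in the $\mathcal{T}$-branch and surviving in the $\mathcal{S}$-branch, so neither part of Lemma~\ref{uno-a-todos} applies to the pair $(v,v')$. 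These $C_i$--$C_j$ adjacencies are genuinely non-uniform in general: they carry the internal structure of $G_C$, which is only shown to be isomorphic to $G_B$ later (Lemma~\ref{Gb-iso-Gc}), and $G_B$ is an arbitrary induced subgraph of the discarded part. So $\mathit{ADeg}(v,S^l_j,G)$ need not be concentrated on type $\beta$, and the equitability comparison with $w\in D_i$ collapses. Your fallback through $\gamma=M_{D_i,D_j}$ has the same defect: $D_i$--$D_j$ adjacencies are those of the surviving subgraph and need not be a single type, so $\gamma$ is not even well defined.

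The paper closes the gap from the opposite end, counting from a vertex $w'\in D_j$ (which survives in both branches) toward the two discarded sets. Every vertex of $S^l_j$ has the same available degree toward $E_i=A_i\cup B_i$ (otherwise the refinements from $\mathcal{S}^k$ to $\mathcal{S}^l$ would have separated them), and every vertex of $T^l_j$ has the same available degree toward $E'_i=A_i\cup C_i$ (same argument in the $\mathcal{T}$-branch); hence $w'$ has the same number of neighbours of each type in $B_i$ as in $C_i$. Since $w'$ sees all $|B_i|$ vertices of $B_i$ with the single type $a^{-1}$ and $|B_i|=|C_i|$, it must see all of $C_i$ with type $a^{-1}$, which supplies the bridging equality $M_{vw'}=a$ between your two groups; the last entry $M_{vu'}$ then follows from one more application of Lemma~\ref{uno-a-todos} with $u'\in B_j\subseteq E$ uniform on $S^l_i=C_i\cup D_i$. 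The point you are missing is that the only pairwise adjacencies that can be forced to a single value are those between a vertex discarded in one branch and a cell surviving in that same branch; every step of the argument must be routed through such pairs, and the $C_i$--$C_j$ and $D_i$--$D_j$ pairs you rely on are not of that form.
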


\begin{proof}
Let us take any $i\in\{1,...,r\}$ and any $j\in\{1,...,r\}$. Since $B_i\subseteq E$ and
$C_j\subseteq S^l_j$, from Lemma~\ref{uno-a-todos}, for each $u\in B_i$, for all $v'\in C_j$,
$M_{uv'}=a$ for some $a\in\{0,...,3\}$. Let us take any such $v'\in C_j$. Then, $M_{v'u}=a^{-1}$
for those particular $v'$ and $u$. Besides, since $C_j\subseteq E'$ and $B_i\subseteq T^l_i$, from
Lemma~\ref{uno-a-todos}, for all $u\in B_i$, $M_{v'u}=b$ for some $b\in\{0,...,3\}$. Since
we already know that $M_{v'u}=a^{-1}$ for that particular pair of vertices, then we conclude
that for all $u\in B_i$, $v'\in C_j$, $M_{uv'}=a$ and $M_{v'u}=a^{-1}$, for some $a\in\{0,...,3\}$.

$S^l_j=C_j\cup D_j$ and $B_i\subseteq E$. Since for all $u\in B_i$, $v'\in C_j$, $M_{uv'}=a$
and $M_{v'u}=a^{-1}$, then from Lemma~\ref{uno-a-todos}, for all $u\in B_i$, $w'\in D_j$,
$M_{uw'}=a$ (clearly, the same $a$) and $M_{w'u}=a^{-1}$.

$T^l_i=B_i\cup D_i$ and $C_j\subseteq E'$. Since for all $u\in B_i$, $v'\in C_j$, $M_{uv'}=a$
and $M_{v'u}=a^{-1}$, then from Lemma~\ref{uno-a-todos}, for all $v'\in C_j$, $w\in D_i$,
$M_{v'w}=a^{-1}$ and $M_{wv'}=a$ (clearly, the same $a$).

Furthermore, all the vertices in $S^l_j=C_j\cup D_j$ have the same number of adjacent vertices of each
type in $E_i=A_i\cup B_i$. Otherwise, they would have been distinguished in the refinement process
from $\mathcal{S}^k$ to $\mathcal{S}^l$. Likewise, all the vertices in $T^l_j=B_j\cup D_j$ have the
same number of adjacent vertices of each type in $E'_i=A_i\cup C_i$. Otherwise, they would have been
distinguished in the refinement process from $\mathcal{S}^k$ to $\mathcal{T}^l$. Hence, the vertices
of $D_j$ must have the same number of adjacent vertices of each type in $B_i$ and $C_i$. Hence, since
for all $w'\in D_j$, and for all $u\in B_i$, $M_{uw'}=a$ and $M_{w'u}=a^{-1}$, then for all $w'\in D_j$,
and for all $v\in C_i$, $M_{vw'}=a$ and $M_{w'v}=a^{-1}$ too.

A similar argument may be used to prove that for all $w\in D_i$, and for all $u'\in B_j$, $M_{wu'}=a$
and $M_{u'w}=a^{-1}$. Then, from Lemma~\ref{uno-a-todos}, since $B_j\subseteq E$, for all $u'\in B_j$,
$M_{u'x}=M_{u'y}$ for all $x,y\in S^l_i$. We already know that for all $u'\in B_j$, $M_{u'w}=a^{-1}$
for all $w\in D_i$, and $S^l_i=C_i\cup D_i$. Hence, for all $v\in C_i$, $M_{u'v}=a^{-1}$ too, and
$M_{vu'}=a$.

Putting together all the partial results obtained, we get the assertion stated in the lemma.
\end{proof}

\begin{corollary}
\label{B-C-D-todos-a-todos}
Let $M=\mathit{Adj}(G)$. For each $i\in\{1,...,r\}$, it is satisfied that for all
$u\in B_i,v\in C_i,w\in D_i$, $M_{uv}=M_{vu}=M_{uw}=M_{wu}=M_{vw}=M_{wv}=a$, where $a \in \{0,3\}$.
\end{corollary}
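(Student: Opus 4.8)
The plan is to obtain this corollary as the special case $i=j$ of Lemma~\ref{Bi-Cj-todos-a-todos}, augmented by the elementary observation that the equalities it supplies force $a$ to be a fixed point of the involution $a\mapsto a^{-1}$. First I would instantiate Lemma~\ref{Bi-Cj-todos-a-todos} with $j=i$. Since the two index sets then coincide, the subsets $B_j,C_j,D_j$ are literally $B_i,C_i,D_i$, and the six ``cross'' adjacencies controlled by the lemma collapse into the \emph{within-cell} adjacencies among $B_i$, $C_i$, and $D_i$. Reading off the six terms in the first displayed equation of the lemma gives, for all $u\in B_i$, $v\in C_i$, $w\in D_i$, that $M_{uv}=M_{uw}=M_{vu}=M_{vw}=M_{wu}=M_{wv}=a$ for a single value $a\in\{0,\ldots,3\}$; that is, every directed adjacency between two of these three subsets equals $a$. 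This already yields the chain of equalities asserted in the corollary.

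It then remains to show $a\in\{0,3\}$. Here I would fix any $u\in B_i$ and $v\in C_i$ and extract from the lemma both $M_{uv}=a$ (from the term $M_{uv'}$ with $v'=v$) and $M_{vu}=a$ (from the term $M_{vu'}$ with $u'=u$); taking one vertex in both an unprimed and a primed role is legitimate because $B_i=B_j$ and $C_i=C_j$ when $i=j$. On the other hand, the adjacency-matrix convention forces $M_{vu}=(M_{uv})^{-1}=a^{-1}$. Combining the two equalities gives $a=a^{-1}$, and since the involution $a\mapsto a^{-1}$ fixes only $0$ and $3$ while swapping $1$ and $2$, we conclude $a\in\{0,3\}$.

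There is essentially no obstacle in this argument; the only point requiring a moment's care is the legitimacy of setting $j=i$ and of reusing a single vertex in both roles, both of which are immediate once one notes that the lemma is quantified over \emph{all} pairs $i,j$ (including $i=j$) and that the relevant subsets coincide in that case. The genuine content is simply recognizing that the symmetry of the listing in Lemma~\ref{Bi-Cj-todos-a-todos}, together with the relation $M_{vu}=(M_{uv})^{-1}$, pins $a$ down to a fixed point of the adjacency involution.
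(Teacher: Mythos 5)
Your proof is correct and follows essentially the same route as the paper: both instantiate Lemma~\ref{Bi-Cj-todos-a-todos} at $j=i$ and then deduce $a=a^{-1}$, the paper by noting that the two displayed lists of the lemma coincide in that case while you invoke the adjacency-matrix relation $M_{vu}=(M_{uv})^{-1}$ directly. The difference is purely cosmetic.
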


\begin{proof}
From Lemma~\ref{Bi-Cj-todos-a-todos}, for the case $i=j$, we get that for all $u\in B_i$,
$v\in C_i$, $w\in D_i$, $M_{uv}=M_{uw}=M_{vu}=M_{vw}=M_{wu}=M_{wv}=a$ and
$M_{uv}=M_{uw}=M_{vu}=M_{vw}=M_{wu}=M_{wv}=a^{-1}$. Hence, it must hold that $a=a^{-1}$, so $a \in \{0,3\}$.
\end{proof}

Let us define two families of partitions of $A_i$ for $i,j\in\{1,...,r\}$:
$$A_i^{cj}=\{x\in A_i : \forall u \in B_i, v' \in C_j, M_{xv'} = M_{uv'} \}$$
$$A_i^{nj}=\{x\in A_i : \forall u \in B_i, v' \in C_j, M_{xv'} \ne M_{uv'} \}$$
Note that, since the vertices of $A_i$ are unable to distinguish among the vertices
of $C_j$, then, if $M_{xv'} \ne M_{uv'}$ for some $u \in B_i$ or some $v' \in C_j$,
then $M_{xv'} \ne M_{uv'}$ for all $u \in B_i$ and all $v' \in C_j$. Hence, each pair
of sets $A_i^{cj}$ and $A_i^{nj}$ defines a partition of $A_i$. Note also that, since
each vertex in $A_i$ has the same type of adjacency with all the vertices in
$B_i \cup C_i \cup D_i$ (from Lemma~\ref{uno-a-todos}), then for all $x \in A_i^{cj}$,
$u\in B_i$, $v\in C_i$, $w\in D_i$, $u'\in B_j$, $v'\in C_j$, and $w'\in D_j$,
$M_{xu'}=M_{xv'}=M_{xw'}=M_{uv'}=M_{uw'}=M_{vu'}=M_{vw'}=M_{wu'}=M_{wv'}$ (from
Lemma~\ref{Bi-Cj-todos-a-todos}).

\begin{figure}[!h]
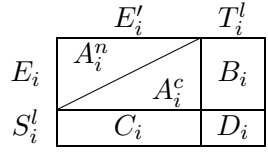

\begin{center}
\begin{tabular}{c|c|c|}
\multicolumn{1}{c}{ } & \multicolumn{1}{c}{$E'_i$} & \multicolumn{1}{c}{$T^l_i$} \\
\cline{2-3} \multicolumn{1}{c|}{$E_i$} & \slashbox{$A^n_i$}{$A_i^c$} & $B_i$ \\
\cline{2-3} \multicolumn{1}{c|}{$S^l_i$} & $C_i$ & $D_i$ \\
\cline{2-3}
\end{tabular}
\end{center}
\label{division-A-i}
\caption{Partition of $A_i$ into subsets $A^c_i$, and $A^n_i$.}
\end{figure}

\begin{lemma}
\label{An-themselves}
For all $i\in\{1,...,r\}$, let $A_i^c=\bigcap_{j=1}^r A_i^{cj}$, and let
$A_i^n=\bigcup_{j=1}^r A_i^{nj}$. Then, any isomorphism of $G_E$ and $G_{E'}$
that maps $G_{E_i}$ to $G_{E'_i}$, maps the vertices in $A_i^n$ among themselves.
\end{lemma}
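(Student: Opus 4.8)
The plan is to exploit the fact that the vertices of $A_i$ lie in $E\cap E'$, so that their adjacencies are constrained simultaneously by \emph{both} bullets of Lemma~\ref{uno-a-todos}. This is what bridges the gap between the definition of $A_i^n$ (which is phrased through the codomain sets $C_j$) and the information that a cell-preserving isomorphism $\phi$ of $G_E$ and $G_{E'}$ actually transports (which lives in the domain graph $G_E$, built from the sets $B_j$). Throughout I write $M=\mathit{Adj}(G)$ and $a_{ij}$ for the generic inter-cell adjacency of Lemma~\ref{Bi-Cj-todos-a-todos}.

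First I would prove the bridging identity: for every $x\in A_i$ and every $j\in\{1,...,r\}$, and for all $u\in B_j$, $v\in C_j$, $w\in D_j$, one has $M_{xu}=M_{xv}=M_{xw}$. Indeed, since $A_i\subseteq E$, the first bullet of Lemma~\ref{uno-a-todos} applied to $S^l_j=C_j\cup D_j$ gives $M_{xv}=M_{xw}$; and since $A_i\subseteq E'$, the second bullet applied to $T^l_j=B_j\cup D_j$ gives $M_{xu}=M_{xw}$. Write $m_{ij}(x)$ for this common value. Every vertex of $B_i$, $C_i$, $D_i$ exhibits $a_{ij}$ toward $B_j\cup C_j\cup D_j$ by Lemma~\ref{Bi-Cj-todos-a-todos}, and by the identity the same holds for $A_i^c$. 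Thus $x\in A_i^{nj}$ iff $m_{ij}(x)\ne a_{ij}$, so \emph{deviance can be read off equally from the domain set $B_j$ or the codomain set $C_j$}; consequently $A_i^n=\{x\in A_i: m_{ij}(x)\ne a_{ij}\mbox{ for some }j\}$, while every vertex of $A_i^c\cup C_i$ is generic (has $m_{ij}\equiv a_{ij}$).

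Next I would use that $\phi$ is an isomorphism with $\phi(E_i)=E'_i$, hence it preserves, for each vertex and each target cell, the number of neighbours of each adjacency type: for $z\in E_i$, for $j$, and for $t\in\{1,2,3\}$, the type-$t$ degree of $z$ into $E_j=A_j\cup B_j$ equals that of $\phi(z)$ into $E'_j=A_j\cup C_j$. For $x\in A_i$ this degree splits as a contribution from $A_j$ (the genuinely variable part, governed by the subgraph $G_A$, which is identical inside $G_E$ and $G_{E'}$) plus the uniform contribution $|B_j|\,[m_{ij}(x)=t]$ toward $B_j$. Comparing with the analogous split for $\phi(x)$ and using $|B_j|=|C_j|$, I would argue that a deviant $x$ (with $m_{ij}(x)\ne a_{ij}$ for some witnessing $j$) cannot be sent to any generic vertex of $A_i^c\cup C_i$, whose entire profile is the generic one; therefore $\phi(x)\in A_i^n$. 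Applying this to every $x$ and every cell, and using that $\phi$ is a bijection respecting the cells, yields $\phi(A_i^n)=A_i^n$ for all $i$, which is the assertion.

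The main obstacle I anticipate is disentangling the $A_j$-contribution from the $B_j/C_j$-contribution in the degree count, i.e. ruling out that a nonstandard adjacency toward $B_j$ (respectively $C_j$) is masked by a compensating adjacency toward $A_j$; this is delicate precisely because the adjacencies among the special sets $A_j$, and the $C_i$--$C_j$ and $C_i$--$A_j$ adjacencies reached in the case where $\phi(x)$ might fall in $C_i$, are not fully pinned down by the earlier lemmas. I expect to resolve it by leaning on the bridging identity together with $|B_j|=|C_j|$ to match the $A$-profiles of $x$ and $\phi(x)$ first (so that, for the special vertices, $A$ is effectively distinguishable from $B$ and $C$) and only then comparing the uniform $B_j/C_j$ contributions; should a purely local vertex-by-vertex count prove insufficient, a global count of deviant adjacencies summed over the special structure should close the gap.
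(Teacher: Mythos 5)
Your bridging identity is correct and is indeed the same observation the paper makes right after defining $A_i^{cj}$ and $A_i^{nj}$ (every $x\in A_i$ lies in $E\cap E'$, so Lemma~\ref{uno-a-todos} applied to $S^l_j=C_j\cup D_j$ and to $T^l_j=B_j\cup D_j$ forces $M_{xu'}=M_{xv'}=M_{xw'}$). But the core step --- ruling out that a deviant $x$ is sent to a generic vertex of $A_i^c\cup C_i$ --- is exactly the part you leave open, and the decomposition you propose cannot close it. Splitting the degree of $\phi(x)$ into $E'_j=A_j\cup C_j$ as ``an $A_j$ part plus a uniform $C_j$ part'' fails when $\phi(x)\in C_i$: the $C_i$--$C_j$ adjacencies are \emph{not} uniform and are not constrained by Lemmas~\ref{uno-a-todos} or~\ref{Bi-Cj-todos-a-todos} (they are part of the surviving graph $G_{V^l}$, whose structure is only dealt with later, in Lemma~\ref{GVl-iso-GWl}); so your assertion that every vertex of $A_i^c\cup C_i$ has the generic profile $m_{ij}\equiv a_{ij}$ toward $B_j\cup C_j\cup D_j$ is false for the $C_j$ component, as you yourself half-concede at the end. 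Your fallback of ``matching the $A$-profiles of $x$ and $\phi(x)$ first'' is circular: an isomorphism of $G_E$ and $G_{E'}$ mapping $E_i$ to $E'_i$ preserves degrees into $E_j$ versus $E'_j$ as wholes, not into $A_j$ separately, and whether $\phi$ respects the sets $A_j$ is essentially what is at stake. The ``global count'' is too vague to assess.

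The missing ingredient is Observation~\ref{sk-equitable}: level $k$ is a backtracking level, so $\mathcal{S}^k$ is equitable, and hence \emph{all} vertices of $S^k_i$ (in particular all of $A_i\cup B_i\cup C_i\cup D_i$) have the same $\mathit{ADeg}$ into the whole cell $S^k_j$. One then obtains the degree into $E_j$ (resp.\ $E'_j$) by \emph{subtraction}, $\mathit{ADeg}(\cdot,E_j,G)=\mathit{ADeg}(\cdot,S^k_j,G)-\mathit{ADeg}(\cdot,S^l_j,G)$ and likewise with $T^l_j$ and $E'_j$, and the subtracted term is completely pinned down by Lemma~\ref{uno-a-todos}: uniform of type $a_{ij}$ on all of $S^l_j$ (resp.\ $T^l_j$) for vertices of $A_i^{cj}\cup B_i$ (resp.\ $A_i^{cj}\cup C_i$), and uniform of a different type for vertices of $A_i^{nj}$. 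This makes both the $A_j$-adjacencies and the $C_i$--$C_j$ adjacencies irrelevant, gives deviant vertices a strictly different $E_j$/$E'_j$-degree than generic ones on both sides, and the conclusion follows from degree preservation. Without invoking equitability of $\mathcal{S}^k$, the compensation scenario you worry about is not excluded, so the argument as proposed does not go through.
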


\begin{proof}
From Observation~\ref{sk-equitable}, partition $\mathcal{S}^k$ is equitable. Hence, for each
$i,j\in\{1,...,r\}$, for all $u,v\in S^k_i$, $\mathit{ADeg}(u,S^k_j,G)=\mathit{ADeg}(v,S^k_j,G)$.
Thus, for all $x \in A_i^{cj}$, $y\in A_i^{nj}$, $u\in B_i$, $v\in C_i$, $w\in D_i$,
$\mathit{ADeg}(x,S^k_j,G)=\mathit{ADeg}(y,S^k_j,G)=\mathit{ADeg}(u,S^k_j,G)=\mathit{ADeg}(v,S^k_j,G)=\mathit{ADeg}(w,S^k_j,G)$.

Let us take any pair of values of $i$ and $j$. From Lemma~\ref{Bi-Cj-todos-a-todos}, all the
vertices of $B_i$ have the same type of adjacency with all the vertices of $S^l_j=C_j\cup D_j$.
Assume this type of adjacency is $a$. From the definition of $A_i^{cj}$, all the vertices of
$A_i^{cj}$ have adjacency $a$ with all the vertices of $S^l_j$.  Hence, for $x \in A_i^{cj}$,
$u\in B_i$, $\mathit{ADeg}(x,S^l_j,G)=\mathit{ADeg}(u,S^l_j,G)$.  Since
$\mathit{ADeg}(x,S^k_j,G)=\mathit{ADeg}(u,S^k_j,G)$ and $\mathit{ADeg}(x,S^l_j,G)=\mathit{ADeg}(u,S^l_j,G)$,
then $\mathit{ADeg}(x,E_j,G)=\mathit{ADeg}(u,E_j,G)$ (note that $E_j=A_j^{ci}\cup A_j^{ni}\cup B_j$,
$S^l_j=C_j\cup D_j$, and $S^k_j=E_j\cup S^l_j$).

However, from the definition of $A_i^{nj}$, for $y\in A_i^{nj}$,
$\mathit{ADeg}(y,S^l_j,G)\ne \mathit{ADeg}(x,S^l_j,G)$.  Hence, since
$\mathit{ADeg}(y,S^k_j,G)=\mathit{ADeg}(x,S^k_j,G)$, $\mathit{ADeg}(y,E_j,G)\ne \mathit{ADeg}(x,E_j,G)$.

Since any isomorphism must match vertices with the same degree,
every isomorphism of $G_E$ and $G_{E'}$ that maps $G_{E_i}$ to $G_{E'_i}$,
maps the vertices in $A^{nj}_i$ among themselves.

Applying this argument over all possible values of $j$, we get that  
any isomorphism of $G_E$ and $G_{E'}$ that maps $G_{E_i}$ to $G_{E'_i}$,
maps the vertices in $A_i^n$ among themselves, for all $i\in\{1,...,r\}$.
\end{proof}

Let us focus on any isomorphism of $G_E$ and $G_{E'}$ that maps $G_{E_i}$ to $G_{E'_i}$
for all $i\in\{1,...,r\}$ (there is at least one from Observation~\ref{Ge-iso-Ge-prima}).

\begin{lemma}
\label{Gb-iso-Gc}
$G_B$ is isomorphic to $G_C$, and there is an isomorphism of them that matches the vertices
in $B_i$ to those in $C_i$, for all $i\in\{1,...,r\}$.
\end{lemma}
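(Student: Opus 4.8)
The plan is to manufacture the desired isomorphism out of the cell-preserving isomorphism of $G_E$ and $G_{E'}$ provided by Observation~\ref{Ge-iso-Ge-prima}, after first trimming away the part of $A$ that such an isomorphism is forced to map onto itself. First I would fix an isomorphism $\phi:G_E\to G_{E'}$ with $\phi(E_i)=E'_i$ for all $i$. By Lemma~\ref{An-themselves}, $\phi$ sends $A^n_i$ into $A^n$; since $\phi$ preserves cells and $A^n_i\subseteq E'_i$, this forces $\phi(A^n_i)=A^n_i$. Removing these vertices, $\phi$ restricts to a cell-preserving isomorphism $G_{A^c\cup B}\to G_{A^c\cup C}$ with $\phi(A^c_i\cup B_i)=A^c_i\cup C_i$. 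Recall $E_i=A_i\cup B_i$ and $E'_i=A_i\cup C_i$, so $A^c_i\cup B_i$ and $A^c_i\cup C_i$ are exactly these cells with the $A^n$ part deleted, and $|B_i|=|C_i|$.

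The key structural fact I would establish next is that the vertices of $A^c_i$ are \emph{external clones} of the vertices of $B_i$ (equivalently $C_i$): every vertex of $A^c_i$ has exactly the same type of adjacency to every vertex of $B_j\cup C_j\cup D_j$ as the vertices of $B_i$ do, namely the single value $a_{ij}$ attached to the pair $(i,j)$ by Lemma~\ref{Bi-Cj-todos-a-todos}. For $j\ne i$ this follows by combining the definition of $A^{cj}_i$ (agreement with $B_i$ on $C_j$) with the uniformity of Lemma~\ref{uno-a-todos} over $S^l_j=C_j\cup D_j$ and over $T^l_j=B_j\cup D_j$; for $j=i$ it follows from Lemma~\ref{uno-a-todos} together with Corollary~\ref{B-C-D-todos-a-todos}, which pins the within-cell value to $a_{ii}\in\{0,3\}$. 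In other words, from the point of view of the whole set $B\cup C\cup D$ an $A^c_i$ vertex is indistinguishable from a $B_i$ (or $C_i$) vertex, and the attachment of $A^c$ to $B$ and the attachment of $A^c$ to $C$ are given by one and the same cell-indexed pattern $(a_{ij})$.

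With this clone property in hand, the final step is to strip $A^c$ off both sides of $\phi$ and land a bijection $f:B\to C$ with $f(B_i)=C_i$ that preserves all adjacencies. Concretely, since $|B_i|=|C_i|$ a counting argument shows that exactly as many vertices of $B_i$ are mapped by $\phi$ into $A^c_i$ as vertices of $A^c_i$ are mapped into $C_i$; rerouting (\emph{short-circuiting}) the former through the latter turns $\phi$ into such an $f$. That $f$ is an isomorphism is then checked from the fact that $\phi$ already is one, using the clone property to absorb the rerouted endpoints: whenever an endpoint is rerouted through $A^c$, its adjacencies to the rest of $B\cup C$ coincide with those of the vertex it replaces. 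This last verification is the main obstacle, because it is exactly a graph-\emph{cancellation} statement --- one must argue that a common part $A^c$ attached to the two variable parts $B$ and $C$ by the very same cell-determined pattern can be cancelled from an isomorphism. The clone property is precisely what makes $A^c$ behave as a module of mutual twins with respect to $B\cup C\cup D$, so that the standard disjoint-union/join cancellation for graphs (applied within each cell, with the uniform inter-cell values $a_{ij}$ carried along) yields $G_B\cong G_C$ matching $B_i$ to $C_i$.
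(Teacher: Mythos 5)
Your proposal follows essentially the same route as the paper's proof: both start from the cell-preserving isomorphism of $G_E$ and $G_{E'}$ given by Observation~\ref{Ge-iso-Ge-prima}, use Lemma~\ref{An-themselves} to confine $A^n_i$ to itself, establish that every vertex of $A^c_i$ has a single uniform adjacency type $a_{ij}$ towards all of $B_j\cup C_j\cup D_j$ (with $a_{ii}\in\{0,3\}$, from Lemma~\ref{uno-a-todos}, Lemma~\ref{Bi-Cj-todos-a-todos} and Corollary~\ref{B-C-D-todos-a-todos}), and then cancel $A^c$ out of the resulting isomorphism $G_{A^c\cup B}\to G_{A^c\cup C}$. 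Where the paper argues cell-by-cell and pair-of-cells-by-pair-of-cells, and rather informally at the crucial cancellation step, you propose one global rerouting; if executed correctly this is cleaner and arguably more rigorous than the printed argument.

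Two cautions. First, your clone property is slightly overstated: vertices of $B_i$ do \emph{not} have a uniform adjacency to $B_j$ (that is exactly the structure of $G_B$ you are trying to recover); what is true, and what you actually use, is that vertices of $A^c_i$ have the uniform value $a_{ij}$ towards \emph{all} of $B_j\cup C_j\cup D_j$, which coincides with the value that $B_i$ vertices have towards $C_j\cup D_j$. Second, and more substantively, the rerouting cannot be an arbitrary bijection between $\{u\in B_i:\phi(u)\in A^c_i\}$ and $\{x\in A^c_i:\phi(x)\in C_i\}$. The adjacencies between $A^c_i$ and $A^c_j$, and inside each $A^c_i$, are \emph{not} uniform: $A^c$ is a module of twins with respect to $B\cup C\cup D$ but not with respect to itself. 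Hence, for $u\in B_i$ and $u'\in B_j$ both rerouted through $A^c$, the value $M_{f(u)f(u')}$ depends on which $A^c$-vertices you route through, and an arbitrary pairing can break the isomorphism. The repair is the one hidden inside the standard disjoint-union cancellation you invoke: send $u$ along its $\phi$-chain $u,\phi(u),\phi^2(u),\dots$, which stays in $A^c_i$ until it must exit into $C_i$ (it cannot cycle, since $\phi$ is injective and the chain cannot return to $B_i$), and set $f(u)$ to that exit point. With this choice the verification closes: chains of equal length transport the adjacency value unchanged, while chains of unequal length force both $M_{uu'}$ and $M_{f(u)f(u')}$ to equal $a_{ij}$ via the clone property. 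So your plan is sound, but the ``short-circuiting'' must be the chain-following one, not an arbitrary matching of the two equinumerous sets.
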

\begin{proof}

Let us analyze the adjacencies between the vertices in $A_i^c$, $B_i$, $C_i$, $A_j^c$, $B_j$,
and $C_j$ for some values of $i$ and $j$. From Corollary~\ref{B-C-D-todos-a-todos}, for all
$u\in B_i$, $v\in C_i$, $M_{uv}=M_{vu}=a$, where $a \in \{0,3\}$. From the definition of $A_i^c$,
for all $x\in A_i^c$, $M_{xu}=M_{xv}=M_{ux}=M_{vx}=M_{uv}=a$.

From Lemma~\ref{An-themselves}, the vertices of $A_i^n$ are mapped among themselves in  
any isomorphism of $G_E$ and $G_{E'}$ that maps $G_{E_i}$ to $G_{E'_i}$. Hence,
the vertices of $A_i^c\cup B_i$ must be mapped to the vertices of $A_i^c\cup C_i$. If
$a=0$, then $A^c_i$, $B_i$, and $C_i$ are disconnected. Hence, $G_{B_i}$ and $G_{C_i}$
must be isomorphic. In the case $a=3$, taking the inverses of the graphs leads to the same
result.

From Lemma~\ref{Bi-Cj-todos-a-todos}, for each $i,j\in\{1,...,r\}$,  there is some
$a\in\{0,...,3\}$ such that for all $u\in B_i$, $v\in C_i$, $u'\in B_j$, $v'\in C_j$,
$M_{uv'}=M_{vu'}=a$ and $M_{u'v}=M_{v'u}=a^{-1}$. From the definition of $A_i^c$, for
all $x\in A_i^c$, for all $u\in B_i$, $v\in C_i$, $u'\in B_j$, $v'\in C_j$,
$M_{xu'}=M_{xv'}=M_{uv'}$.

\begin{figure*}[!htb]
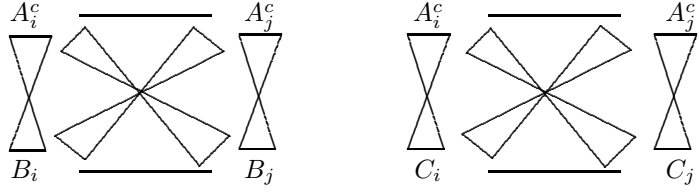

\centerline{\input B_isoC.pictex}
\caption{Adjacencies between $E_i$ and $E_j$, and between $E'_i$ and $E'_j$.}
\label{B_iso_C}
\end{figure*}

Putting all this together, we come to a picture of the adjacencies among $A_i^c$, $B_i$, $C_i$, $A_j^c$,
$B_j$, and $C_j$ as shown in Figure~\ref{B_iso_C}. The connections between the vertices of $A_i^c$ and
the vertices of $B_i$, and between the vertices of $A_i^c$ and the vertices of $C_i$ are all-to-all
(all the same) of value $0$ or $3$. Similarly, the adjacencies between the vertices of $A_j^c$ and the
vertices of $B_j$, and the adjacencies between the vertices of $A_j^c$ and the vertices of $C_j$ are
all the same, all-to-all $0$ or $3$ (not necessarily equal to those of $A_i^c$ and $B_i$ or $C_i$).
The adjacencies between $A_i^c$ and $B_j\cup C_j$ are all the same, all-to-all of any value in the set
$\{0,...,3\}$. This also applies to the adjacencies between $A_j^c$ and $B_i\cup C_i$.

If $G_{B_i\cup B_j}$ is not isomorphic to $G_{C_i\cup C_j}$, the discrepancy must be in the adjacencies
between vertices of $B_i$ and $B_j$ with respect to the adjacencies between vertices of $C_i$ and $C_j$.
In such a case, in the isomorphism between $G_{E_i\cup E_j}$ and $G_{E'_i\cup E'_j}$ (recall that
from Observation~\ref{Ge-iso-Ge-prima} there is an isomorphism of $G_E$ and $G_{E'}$ that maps the vertices
of $E_i$ to the vertices in $E'_i$ for all $i\in\{1,...,r\}$) some vertices of $A_i^c$ should be mapped
to vertices of $C_i$, and some of the vertices of $B_i$ should be mapped to vertices of $A_i^c$. However,
due to the adjacencies among $A_i^c$, $B_i$, $C_i$, $A_j^c$, $B_j$, and $C_j$, shown in Figure~\ref{B_iso_C},
that would imply that the adjacencies between the vertices of $B_i$ and $B_j$ had to match adjacencies
between the vertices of $A^i_c$ and $A^j_c$. But, in that case, the same adjacency pattern must exist
between the vertices of $C_i$ and $C_j$, to match the corresponding subgraph of $G_{E_i\cup E_j}$.
Hence, the adjacencies between $B_i$ and $B_j$ could have been matched to the adjacencies between
$C_i$ and $C_j$.

Since this applies for all values of $i$ and $j$, we conclude that $G_B$ is isomorphic to $G_C$,
and there is an isomorphism of them that matches the vertices in $B_i$ to those in $C_i$, for all
$i\in\{1,...,r\}$, completing the proof.
\end{proof}

\begin{lemma}
\label{GVl-iso-GWl}
$G_{V^l}$ and $G_{W^l}$ are isomorphic, and there is an isomorphism of them that maps the vertices
in $S^l_i$ to the vertices of $T^l_i$ for all $i \in \{1,...,r\}$.
\end{lemma}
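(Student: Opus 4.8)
The plan is to exhibit an explicit isomorphism from $G_{V^l}$ to $G_{W^l}$ by gluing the identity on the common part $D$ to the isomorphism already produced in Lemma~\ref{Gb-iso-Gc}. Recall that $V^l=\bigcup_i(C_i\cup D_i)=C\cup D$ and $W^l=\bigcup_i(B_i\cup D_i)=B\cup D$, so the two vertex sets differ only in that $V^l$ carries $C$ exactly where $W^l$ carries $B$. Let $\phi:G_B\to G_C$ be the isomorphism of Lemma~\ref{Gb-iso-Gc}, which satisfies $\phi(B_i)=C_i$ for all $i$. I would define $g:V^l\to W^l$ by $g(w)=w$ for $w\in D$ and $g(v)=\phi^{-1}(v)$ for $v\in C$. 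This is visibly a bijection carrying $S^l_i=C_i\cup D_i$ onto $T^l_i=B_i\cup D_i$ for every $i$, which is the cell correspondence demanded by the statement.

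It then remains to verify that $g$ preserves the adjacency relation of $G$, and I would split this into three cases by the location of the two endpoints. When both lie in $D$ the map is the identity, so adjacencies are preserved trivially. When both lie in $C$, preservation is immediate from the fact that $\phi^{-1}$ is an isomorphism of $G_C$ onto $G_B$, which already accounts for the cross-cell adjacencies between $C_i$ and $C_j$. The only case with real content is a mixed pair $v\in C_i$, $w\in D_j$, where I must show $M_{vw}=M_{\phi^{-1}(v)\,w}$ and $M_{wv}=M_{w\,\phi^{-1}(v)}$; that is, the adjacency between $C_i$ and $D_j$ must agree with the adjacency between $B_i$ and $D_j$ in both directions.

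This agreement is precisely what the preceding structural results supply. For $i=j$, Corollary~\ref{B-C-D-todos-a-todos} says that every adjacency among $B_i$, $C_i$, $D_i$ equals a single value $a\in\{0,3\}$, so the adjacency between $C_i$ and $D_i$ and that between $B_i$ and $D_i$ coincide. For $i\neq j$, Lemma~\ref{Bi-Cj-todos-a-todos} assigns a common value $a$ to $M_{vw'}$ and $M_{uw'}$ (and the inverse $a^{-1}$ to the reversed arcs) uniformly over $u\in B_i$, $v\in C_i$, $w'\in D_j$; since $\phi^{-1}(v)$ lies in the same cell $B_i$, the two required equalities follow at once. Assembling the three cases shows that $g$ is an adjacency-preserving bijection, hence an isomorphism of $G_{V^l}$ and $G_{W^l}$ mapping each $S^l_i$ to $T^l_i$, which is the claim.

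The main difficulty here is organizational rather than mathematical: one must keep the directional convention $a^{-1}$ straight while checking the reversed arcs $M_{wv}$, and, crucially, must route $\phi$ through its inverse so that a vertex of $C_i$ is sent into $B_i$ with the same index $i$ --- this index match is exactly what lets the same-cell statements of Corollary~\ref{B-C-D-todos-a-todos} and Lemma~\ref{Bi-Cj-todos-a-todos} apply. No structural input beyond Lemma~\ref{Gb-iso-Gc}, Lemma~\ref{Bi-Cj-todos-a-todos}, and Corollary~\ref{B-C-D-todos-a-todos} is needed; the lemma is essentially the assembly of these three.
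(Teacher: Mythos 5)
Your proposal is correct and follows essentially the same route as the paper: glue the $G_B\cong G_C$ isomorphism of Lemma~\ref{Gb-iso-Gc} (inverted so that $C_i$ lands in $B_i$) with the identity on $D$, and verify the cross adjacencies via Corollary~\ref{B-C-D-todos-a-todos} (same index) and Lemma~\ref{Bi-Cj-todos-a-todos} (different indices). Your write-up merely makes explicit the case analysis that the paper compresses into ``it is easy to see.''
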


\begin{proof}
From Lemma~\ref{Bi-Cj-todos-a-todos}, we know that for each $i,j\in\{1,...,r\}$, there is some
$a\in\{0,...,3\}$ such that for all $u\in B_i$, $v\in C_i$, $w\in D_i$, $u'\in B_j$, $v'\in C_j$,
and $w'\in D_j$, $M_{uv'}=M_{uw'}=M_{vu'}=M_{vw'}=M_{wu'}=M_{wv'}=a$ and
$M_{u'v}=M_{u'w}=M_{v'u}=M_{v'w}=M_{w'u}=M_{w'v}=a^{-1}$.

Note also that, from Corollary~\ref{B-C-D-todos-a-todos},
for all $u\in B_i$, $v\in C_i$, $w\in D_i$, $M_{uv}=M_{vw}=M_{wu}=a$, where $a \in \{0,3\}$.
This adjacency pattern is graphically shown in Figure~\ref{grafica-Gv_iso_Gw}.

\begin{figure*}[htb!]
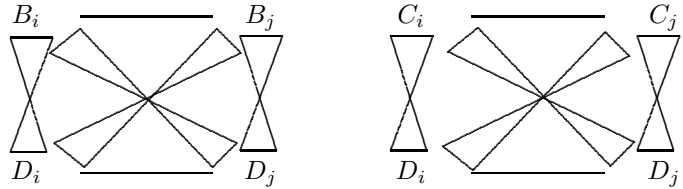

\centerline{\input Gv_iso_Gw.pictex}
\caption{Adjacencies between $S^l_i$ and $S^l_j$, and between $T^l_i$ and $T^l_j$.}
\label{grafica-Gv_iso_Gw}
\end{figure*}

From Lemma~\ref{Gb-iso-Gc}, we know that $G_B$ is isomorphic to $G_C$, and there is an isomorphism
of them that matches the vertices in $B_i$ to those in $C_i$, for all $i\in\{1,...,r\}$.

From the fact that $G_D$ is isomorphic to itself, and the previous considerations on the adjacency
pattern between the vertices in $B_i$, $C_i$, $D_i$, $B_j$, $C_j$, and $D_j$ for all $i,j\in\{1,...,r\}$,
shown in Figure~\ref{grafica-Gv_iso_Gw}, it is easy to see that the isomorphism of $G_B$ and $G_C$
obtained from Lemma~\ref{Gb-iso-Gc}, toghether with the trivial automorphism of $G_D$ yields an
isomorphism of $G_{V^l}$ and $G_{W^l}$, what completes the proof.
\end{proof}

We have shown that if two alternative sequences of partitions $S^{k+1},...,S^l$ and $T^{k+1},...,T^l$
lead to compatible partitions $S^l$ and $T^l$, where all their cells are subcells of different cells of  
a previous common level $k$, then the remaining subgraphs are isomorphic, and the vertices in each cell
of one partition may be mapped to the vertices in its corresponding cell in the other partition by one
such isomorphism. Thus, if during the search for a sequence of partitions compatible with the target,  
we have got an incompatibility at some point beyond level $l$, and we have to backtrack from one level
$l$ to another level $k$ in which all the cells are different supersets of the cells in the current
backtracking point, when trying a compatible path, we will get to the same dead-end. Hence, it is of no
use to try another path from one such level $k$, and it will be necessary to backtrack to some point
where at least two cells in the current backtracking point are subsets of the same cell in the previous  
backtracking point. This proves Theorem~\ref{main-theorem}.

\end{document}